\documentclass[pldi,preprint,nocopyrightspace]{sigplanconf-pldi16}



\usepackage{srcltx}
\usepackage{url}

\usepackage{latexsym,amsthm,amsmath,amssymb,stmaryrd}
\usepackage[toc,page]{appendix}

\usepackage{xspace}
\usepackage{enumitem}
\usepackage{graphics}
\usepackage{listings}
\usepackage{supertabular}
\usepackage{marvosym}
\usepackage{mathpartir}
\usepackage{color,soul} 
\usepackage[table]{xcolor} 
\usepackage[multiuser=true,draft,silent]{fixme}
\usepackage{hyperref}

\usepackage{listings}
\lstset{language=Ruby,
  basicstyle=\small\sffamily,
  columns=flexible,
  numbers=left,
  numberstyle=\em\scriptsize,
  frame=single,
  mathescape=true,
  escapeinside={\#\#}{\#},
  morekeywords={require,extend},
  showstringspaces=false,
  literate={{->}{\ensuremath{\rightarrow}}2
                {=>}{\ensuremath{\Rightarrow}}2
                {.!}{\ensuremath{\top}}1
                {.?}{\ensuremath{\bot}}1
                {<=}{\ensuremath{\leq}}1
                {forall}{\ensuremath{\forall}}1
                {<<}{\ensuremath{\ll}}1
                {>>}{\ensuremath{\rr}}1
               }
}

\newcommand{\ruby}[1]{\textsf{#1}}


\usepackage{amsthm}
\newtheorem{lemma}{Lemma}
\newtheorem{definition}{Definition}
\newtheorem{theorem}{Theorem}

\definecolor{lightgreen}{rgb}{.85,.95,.85}
\definecolor{lightblue}{rgb}{.85,.90,1}
\definecolor{lightred}{rgb}{.95,.85,.85}
\definecolor{lightgrey}{rgb}{.95,.95,.95}
\sethlcolor{lightred}

\definecolor{drkyellow}{rgb}{0.4,0.3,0}
\definecolor{drkred}{rgb}{0.5,0,0}
\definecolor{drkgreen}{rgb}{0,0.5,0}

\definecolor{dkgreen}{rgb}{0.1,0.40,0.1}

\definecolor{dkred}{rgb}{0.5,0,0}
\definecolor{medred}{rgb}{0.6,0,0}
\definecolor{gray}{rgb}{0.5,0.5,0.5}
\definecolor{mauve}{rgb}{0.58,0,0.82}

\newcommand{\DE}{\ensuremath{\mathit{E}}}
\newcommand{\DT}{\ensuremath{\mathit{DT}}}

\newcommand{\TE}{\ensuremath{\Gamma}}
\newcommand{\TEalt}{\ensuremath{\Delta}}
\newcommand{\TT}{\ensuremath{\mathit{TT}}}

\newcommand{\DS}{\ensuremath{\mathit{S}}}
\newcommand{\TS}{\ensuremath{\mathit{TS}}}
\newcommand{\config}[1]{\langle{}#1\rangle{}}
\newcommand{\rname}[1]{\rm \text{(#1)}}
\newcommand{\dom}{\mathop\textit{dom}}
\newcommand{\tselt}[4]{(#1[#2], \config{#3, #4})}
\newcommand{\cache}{\ensuremath{\mathit{X}}}
\newcommand{\invcache}[2]{#1 \backslash #2}
\newcommand{\upgcache}[2]{#1[#2]}
\newcommand{\deriv}{\ensuremath\mathcal{D}}
\newcommand{\derivm}{\deriv_M}
\newcommand{\derivt}{\deriv_{\leq}}
\newcommand{\blame}{\textit{blame}}

\newcommand{\snil}{\mathsf{nil}}
\newcommand{\sself}{\mathsf{self}}
\newcommand{\snew}[1]{#1.\mathsf{new}}
\newcommand{\sif}[3]{\mathsf{if}~#1~\mathsf{then}~#2~\mathsf{else}~#3}

\newcommand{\sprem}[2]{\ensuremath{\lambda #1.#2}}
\newcommand{\sdef}[2]{\mathsf{def}~#1 = #2}
\newcommand{\smethsig}[2]{\mathsf{type}~#1:#2}

\newcommand{\stype}[1]{\mathsf{type\_of}(#1)}

\newcommand{\pruby}{$\mathcal{P}$Ruby}

\newcommand{\name}{Hummingbird}

\newcommand{\sname}[1]{\textit{#1}}

\fxuselayouts{nomargin,footnote}
\fxusetheme{color}
\FXRegisterAuthor{bree}{abree}{\color{blue}BR}
\FXRegisterAuthor{jeff}{ajeff}{\color{red}JSF}

\title{Just-in-Time Static Type Checking for Dynamic Languages}

\authorinfo{Brianna M. Ren \and Jeffrey S. Foster}
          {University of Maryland, College Park}
          {\{bren, jfoster\}@cs.umd.edu}

\begin{document}

\maketitle

\begin{abstract}
  Dynamic languages such as Ruby, Python, and JavaScript have many compelling benefits, but the lack of
  static types means subtle errors can remain latent in code for a
  long time. While many researchers have developed various systems to
  bring some of the benefits of static types to dynamic languages,
  prior approaches have trouble dealing with metaprogramming,
  which generates code as the program executes.  In this paper, we
  propose Hummingbird, a new system that uses a novel technique,
  \emph{just-in-time static type checking}, to type check Ruby code
  even in the presence of metaprogramming. In Hummingbird,
  method type signatures are gathered dynamically at run-time, as
  those methods are created. When a method is called, Hummingbird
  statically type checks the method body against current type
  signatures. Thus, Hummingbird provides thorough static checks on a
  per-method basis, while also allowing arbitrarily complex
  metaprogramming. For performance, Hummingbird memoizes the static type
  checking pass, invalidating cached checks only if necessary.
  We formalize Hummingbird using a core, Ruby-like language and prove it
  sound. To evaluate Hummingbird, we applied it to six apps,
  including three that use Ruby on Rails, a powerful framework that
  relies heavily on metaprogramming.
  We found that all apps typecheck
  successfully using Hummingbird, and that Hummingbird's performance
  overhead is reasonable. We applied
  Hummingbird to earlier versions
  of one Rails app and found several type errors that had been introduced and
  then fixed. Lastly, we demonstrate using Hummingbird in Rails
  development mode to typecheck an app as live updates are applied to it.
\end{abstract}


\category{F.3.2}{Semantics of Programming Languages}
{Program analysis}

\keywords 
type checking; dynamic languages; Ruby


\section{Introduction}


Many researchers have explored ways to bring the benefits of static
typing to dynamic
languages~\cite{furr:oops09,furr:oopsla09,an:popl11,ren:oops13,aycock:python,anderson:javascript,thiemann:javascript,Lerner:2013:TRT:2508168.2508170,Lerner:2013:CFF:2524984.2524990,bierman2014understanding,Rastogi:2015:SEG:2676726.2676971,Tobin-Hochstadt:2008:DIT:1328438.1328486,Maidl:2014:TLO:2617548.2617553,ancona:rpython}. However,
many of these prior systems do not work well in the presence of
\emph{metaprogramming}, in which code the program relies on is
generated as the program executes. The challenge is that purely static
systems cannot analyze metaprogramming code, which is often
complicated and convoluted; and prior mixed static/dynamic systems are
either cumbersome or make certain limiting assumptions.
(Section~\ref{sec:related} discusses prior work in detail.)


In this paper, we introduce \name{},\footnote{A hummingbird can
  dynamically flap its wings while statically hovering in place.} a
type checking system for Ruby that solves this problem using a new
approach we call \emph{just-in-time static type checking}.
In \name{}, user-provided type annotations actually execute at
\emph{run-time}, adding types to an environment that is maintained
during execution.  As metaprogramming code creates new methods, it, too,
executes type annotations to assign types to dynamically created
methods. Then whenever a method \ruby{m} is called, \name{}
\emph{statically} type checks \ruby{m}'s body in the current dynamic
type environment. More precisely, \name{} checks that \ruby{m} calls
methods at their types as annotated, and that \ruby{m} itself matches
its annotated type.  Moreover, \name{} caches the type check so that
it need not recheck~\ruby{m} at the next call unless the dynamic type
environment has changed in a way that affects \ruby{m}.


Just-in-time static type checking provides a highly effective tradeoff between purely
dynamic and purely static type checking. On the one hand,
metaprogramming code is very challenging to analyze statically, but in
our experience it is easy to create type annotations at run time for
generated code. On the other hand, by statically analyzing whole
method bodies, we catch type errors earlier than a purely
dynamic system, and we can soundly reason about all possible execution paths
within type checked methods. (Section~\ref{sec:overview} shows how several
examples of metaprogramming are handled by \name{}.)

To ensure our approach to type checking is correct, we formalize
\name{} using a core, Ruby-like language in which
method creation and method type annotation can
occur at arbitrary points during execution. We provide a
flow-sensitive type checking system and a dynamic semantics that
invokes the type system at method entry, caching the resulting typing
proof. Portions of the cache may be invalidated as new methods are
defined or type annotations are changed. We prove
soundness for our type
system. (Section~\ref{sec:formalism} presents the formalism.)

Our implementation of \name{} piggybacks on two prior systems we developed. We use
the Ruby Intermediate Language~\cite{furr:dls09-ril,DRuby} to parse input
Ruby files and translate them to simplified control-flow graphs. We
use RDL~\cite{strickland:dls12,RDL}, a Ruby contract system, to intercept method calls
and to represent and store method type signatures at run
time. \name{} supports an extensive set of typing features, including
union types, intersection types, code blocks (anonymous functions),
generics, modules, and type casts, among others.
(Section~\ref{sec:implementation} describes our implementation.)

We evaluated \name{} by applying it to six Ruby apps. Three use Ruby
on Rails (just ``Rails'' from now on), a popular, sophisticated web
app framework that uses metaprogramming heavily both to make Rails
code more compact and expressive and to support ``convention over
configuration.'' We should emphasize that Rails's use of
metaprogramming makes static analysis of it very
challenging~\cite{an:ase09}.  Two additional apps use other styles of
metaprogramming, and the last app does not use metaprogramming, as a
baseline.

We found that all of our subject apps
type check successfully using \name{}, and that dynamically
generated types are essential for the apps that use metaprogramming.
We also found that \name{}'s performance overhead
ranges from 19\% to 469\%,
which is much better than prior
approaches \cite{an:popl11,ren:oops13},
and that caching is essential to achieving this performance.
For one Rails app, we ran type checking on many prior versions,
and we found a total of six type errors that had been introduced and
then later fixed. We also ran the app in Rails \emph{development
  mode}, which reloads files as they are edited, to demonstrate how
\name{} type check caching behaves in the presence of modified methods.
(Section~\ref{sec:experiments} reports on our results.)

In summary, we believe \name{} is an important step forward in
our ability to bring the benefits of static typing to dynamic
languages while still supporting flexible and powerful metaprogramming
features.

\section{Overview}
\label{sec:overview}

We begin our presentation by showing some uses of metaprogramming in Ruby and the
corresponding \name{} type checking process. The examples below
are from the experiments in Section~\ref{sec:experiments}.

\begin{figure}
\begin{lstlisting}
class Talk < ActiveRecord::Base
  belongs_to :owner, :class_name => "User" ##\label{line:belongs_to}#
  ....
  type :owner?, "(User) -> %bool"
  def owner?(user) ##\label{line:owner}#
    return owner == user
end end

module ActiveRecord::Associations::ClassMethods ##\label{line:assoc-start}#
  pre(:belongs_to) do |*args|
     hmi = args[0]
     options = args[1]
     hm = hmi.to_s
     cn = options[:class_name] if options
     hmu = cn ? cn : hm.singularize.camelize
     type hm.singularize, "() -> #{hmu}"
     type "#{hm.singularize}=", "(#{hmu}) -> #{hmu}"
     true
end end ##\label{line:assoc-end}#
\end{lstlisting}
  \nocaptionrule{}
  \caption{Ruby on Rails Metaprogramming.}
  \label{fig:belongs_to}
\end{figure}

\paragraph*{Rails Associations.}

The top of Figure~\ref{fig:belongs_to} shows an excerpt from the
\sname{Talks} Rails app. This code defines a class \ruby{Talk} that is
a \emph{model} in Rails, meaning an instance of \ruby{Talk} represents
a row in the \ruby{talks} database table. The change in case and
pluralization here is not an accident---Rails favors ``convention over
configuration,'' meaning many relationships that would otherwise be
specified via configuration are instead implicitly expressed by using similar or the
same name for things.

In this app, every talk is owned by a user, which in implementation
terms means a \ruby{Talk} instance has a foreign key \ruby{owner\_id}
indicating the owner, which is an instance of class \ruby{User} (not
shown). The existence of that relationship is defined on
line~\ref{line:belongs_to}. Here it may look like \ruby{belongs\_to}
is a keyword, but in fact it is simply a method call. The call
passes the \emph{symbol} (an interned string) \ruby{:owner} as the first
argument, and the second argument is a hash that maps symbol
\ruby{:class\_name} to string \ruby{"User"}.

Now consider the \ruby{owner?} method, defined on
line~\ref{line:owner}. Just before the method, we introduce a type
annotation indicating the method takes a \ruby{User} and
returns a boolean. Given such an annotation, \name{}'s goal is
to check whether the method body has the indicated type.\footnote{In
  practice \ruby{type} takes another argument to tell \name{} to type
  check the body, in contrast to library and framework methods whose
  types are trusted. We elide this detail for simplicity.}  This
should be quite simple in this case, as the body of \ruby{owner?}
just calls
no-argument method \ruby{owner} and checks whether the result is equal
to \ruby{user}.

However, if we examine the remaining code of \ruby{Talk} (not shown),
we discover that \ruby{owner} is not defined anywhere in the class!
Instead, this method is created at run-time by
\ruby{belongs\_to}. More specifically, when \ruby{belongs\_to} is
called, it defines several convenience methods that perform
appropriate SQL queries for the
relationship~\cite{Associations}, in this case to get the
\ruby{User} instance associated with the \ruby{Talk}'s owner.
Thus, as we can see, it is critical for \name{} to handle such dynamically
created methods even to type check simple Rails code.

Our solution is to instrument \ruby{belongs\_to} so that, just as it
creates a method dynamically, it also creates method type signatures
dynamically. The code on
lines~\ref{line:assoc-start}--\ref{line:assoc-end} of
Figure~\ref{fig:belongs_to} accomplishes this.
\name{} is built on RDL, a Ruby
contract system for specifying pre- and postconditions~\cite{strickland:dls12,RDL}.
The precondition is specified via a \emph{code block}---an anonymous
function (i.e., a lambda) delimited by
\ruby{do}$\ldots$\ruby{end}---passed to \ruby{pre}.
Here the code block trivially returns \ruby{true} so the precondition is always
satisfied (last line) and, as a side effect,
creates method type annotations for \ruby{belongs\_to}.

In more detail, \ruby{hmi} is set to the first argument to
\ruby{belongs\_to}, and \ruby{options} is either \ruby{nil} or the
hash argument, if present. (Here \ruby{hm} is shorthand for ``has
many,'' i.e., since the \ruby{Talk} belongs to a \ruby{User}, the
\ruby{User} has many \ruby{Talk}s.) Then \ruby{hmu} is set to either
the \ruby{class\_name} argument, if present, or
\ruby{hmi} after singularizing and camel-casing
it. Then \ruby{type} is called twice, once to give a type to a
getter method created by \ruby{belongs\_to}, and once for a setter
method (whose name ends with \ruby{=}). Notation \ruby{\#\{e\}}
inside a string evaluates the expression \ruby{e} and inserts the
result in the string.
In this particular case, these
two calls to \ruby{type} evaluate to
\begin{lstlisting}[numbers=none]
  type "owner", "() -> User"
  type "owner=", "(User) -> User"
\end{lstlisting}

Now consider executing this code. When \ruby{Talk} is loaded,
\ruby{belongs\_to} will be invoked, adding those type signatures to
the class. Then when \ruby{owner?} is called, \name{} will perform
type checking using currently available type information, and so it
will be able to successfully type check the body. Moreover, notice
this approach is very flexible. Rails does not require that
\ruby{belongs\_to} be used at the beginning of a class or even
during this particular class definition. (In Ruby, it is possible to
``re-open'' a class later on and add more methods to it.) But no
matter where the call occurs, it must be before \ruby{owner?} is
called so that \ruby{owner} is defined. Thus in this case, \name{}'s
typing strategy matches well with Ruby's semantics.


\paragraph*{Type Checking Dynamically Created Methods.}

\begin{figure}
\begin{lstlisting}
module Rolify::Dynamic
  def define_dynamic_method(role_name, resource)
    class_eval do
      define_method("is_#{role_name}?".to_sym) do ##\label{line:block}#
        has_role?("#{role_name}")
      end if !method_defined?("is_#{role_name}?".to_sym) ##\label{line:postif}#
      ...
  end end

  pre :define_dynamic_method do |role_name, resource| ##\label{line:define-pre}#
    type "is_#{role_name}?", "() -> %bool"
    true
end end

class User; include Rolify::Dynamic end ##\label{line:rolify}#
user = User.first
user.define_dynamic_method("professor", ...)
user.define_dynamic_method("student", ...)
user.is_professor?
user.is_student?
\end{lstlisting}
  \nocaptionrule{}
  \caption{Methods Dynamically Created by User Code.}
  \label{fig:define_method}
\end{figure}

In the previous example, we trusted Rails to dynamically
generate code matching the given type signature.
Figure~\ref{fig:define_method} shows an example, extracted
from \sname{Rolify}, in which user code dynamically generates a method. The first
part of the figure defines a module (aka mixin) with a two-argument method
\ruby{define\_dynamic\_method}. The method body calls
\ruby{define\_method} to create a method named using the first
argument, as long as that method does not exist (note the postfix
\ruby{if} on line~\ref{line:postif}). Similarly to earlier,
line~\ref{line:define-pre} adds a precondition
to \ruby{define\_dynamic\_method} that provides an appropriate method
type. (We do not check for a previous type definition since adding the
same type again is harmless.)

The code starting at line~\ref{line:rolify} uses the module. 
This particular code is not from our experiment but is merely for
expository purposes.
Here we
(re)open class \ruby{User} and mix in the module. Then we create a
user; call \ruby{define\_dynamic\_method} twice; and then call the
generated methods \ruby{is\_professor?} and \ruby{is\_student?}.

In this case, since the generated methods have type annotations and
are in user code, \name{} type checks their bodies when
they are called, just like any other user-defined method with a type. For example, consider the call to
\ruby{is\_professor?}, which is given type \ruby{()
  $\rightarrow$ \%bool}. At the call, \name{} type checks the code block
at line~\ref{line:block} and
determines that it has no arguments and that its body returns a
boolean, i.e., it type checks.


\paragraph*{User-provided Type Signatures.}

\begin{figure}
\begin{lstlisting}
Transaction = Struct.new(:type, :account_name, :amount) ##\label{line:struct}#
class ApplicationRunner
  def process_transactions
    @transactions.each do |t|
      name = t.account_name
      ...
  end ... end
  field_type :@transactions, "Array<Transaction>" ##\label{line:field_type}#
end

class Struct
  def self.add_types(*types)
    members.zip(types).each {|name, t|
      self.class_eval do 
        type name, "() -> #{t}"
        type "#{name}=", "(t) -> #{t}"
      end
    }
  end
end
Transaction.add_types("String", "String", "String")
\end{lstlisting}
  \nocaptionrule{}
  \caption{Type Signatures for Struct.}
  \label{fig:struct}
\end{figure}

In the examples so far, the types for dynamically created methods
could be determined automatically. However, consider
Figure~\ref{fig:struct}, which shows an excerpt from \sname{CCT} that
uses \ruby{Struct} from the Ruby core library.
Line~\ref{line:struct} creates a new class, instances of which are
defined to have getters \ruby{type}, \ruby{account\_name}, and
\ruby{amount}, and setters \ruby{type=}, \ruby{account\_name=}, and
\ruby{amount=}.  The \ruby{process\_transactions} method iterates
through instance field \ruby{@transactions} (whose type is provided on
line~\ref{line:field_type}), and calls the \ruby{account\_name} method
of each one.

From line~\ref{line:struct} we know the \ruby{account\_name} method
exists, but we do not know its type. Indeed, a ``struct field'' can
hold any type by default. Thus, to fully type check the body of
\ruby{process\_transactions}, we need more information from the
programmer to specify the type of \ruby{account\_name}.

The bottom part of Figure~\ref{fig:struct} defines a new method,
\ruby{add\_types}, that the programmer can call to indicate desired
struct field types. The types are given in the same order as the
constructor arguments, and the body of \ruby{add\_types} uses
\ruby{zip} to pair up the constructor arguments (retrieved via
\ruby{members}) and the types, and then iterates through the pairs,
creating the appropriate type signatures for the getters and
setters. The last line of the figure uses \ruby{add\_types} to create
type signatures for this example, allowing us to type
check \ruby{process\_transactions} when it is called.

In this particular case, we could have individually
specified type signatures for the methods of
\ruby{Transaction}. However, because \name{} lets programmers write
arbitrary Ruby programs to generate types, we were able to develop
this much more elegant solution.

\section{Formalism}
\label{sec:formalism}

\begin{figure}[t!]
  \small
  \begin{displaymath}
    \begin{array}{lrcll}
      \text{values} & v & ::= & \snil \mid [A] \\
      \text{expressions} & e & ::= & v
                                     \mid x \mid \sself
                                     \mid x = e
                                     \mid e; e
                                     \mid \snew{A} \\
      && \mid & \sif{e}{e}{e}
                \mid e.m(e) \\
      && \mid & \sdef{A.m}{b} \mid \smethsig{A.m}{\tau_m} \\
      \text{premths} & b & ::= & \sprem{x}{e} \\
      \text{val typs} & \tau & ::= &
      A \mid \snil \\
      \text{mth typs} & \tau_m & ::= & \tau \rightarrow \tau \\ \\
    \end{array}
  \end{displaymath}

  \begin{displaymath}
      \begin{array}{r@{\;}c@{\;}lr@{\;}c@{\;}lr@{\;}c@{\;}l}
        x & \in & \text{var ids} &
        m & \in & \text{mth ids} &
        A & \in & \text{cls ids} \\
      \end{array}
    \end{displaymath}

    \begin{displaymath}
      \begin{array}{l@{\;}rcl}
        \text{dyn env} & \DE & : & \text{var ids} \rightarrow \text{vals} \\
        \text{dyn cls tab} & \DT & : & \text{cls ids} \rightarrow \text{mth ids}
                                      \rightarrow \text{premths} \\
        \text{contexts} & C & ::= & \Box
                                    \mid x = C
                                    \mid C.m(e)
                                    \mid v.m(C) \\
                       && \mid & C; e
                                 \mid \sif{C}{e}{e} \\
        \text{stack} & \DS & ::= & \cdot \mid (\DE, C)::\DS \\
        \text{type env} & \TE, \TEalt & : & \text{var ids} \rightarrow \text{val
                     typs} \\
        \text{type tab} & \TT & : & \text{cls ids} \rightarrow \text{mth ids}
                  \rightarrow \text{mth typs} \\
        \text{cache} & \cache & ::= & \text{cls ids} \rightarrow \text{mth ids} \rightarrow \derivm \times \derivt \\
        \text{typ chk deriv} & \derivm & ::= & \TT \vdash \config{\TE, e}
      \Rightarrow \config{\TE', \tau} \\
        \text{subtyp deriv} & \derivt & ::= & \tau_1 \leq \tau_2 \\
    \end{array}
  \end{displaymath}

  \nocaptionrule{}
  \caption{Source Language and Auxiliary Definitions.}
  \label{fig:source-lang}
\end{figure}

We formalize \name{} using the core, Ruby-like language shown at the
top of Figure~\ref{fig:source-lang}. \emph{Values}~$v$ include
$\snil$, which can be treated as if it has any type, and $[A]$, which
is an instance of class $A$. Note that we omit both fields and
inheritance from our formalism for simplicity, but they are handled by
our implementation.

\begin{figure*}[t!]
  \small

  \framebox{$\TT \vdash \config{\TE, e} \Rightarrow \config{\TE', \tau}$}
  \begin{displaymath}
    \begin{array}{c}
      \infer[\rname{TNil}]
      { }
      {\TT \vdash \config{\TE, \snil} \Rightarrow \config{\TE, \snil}}
      \qquad

      \infer[\rname{TObject}]
      { }
      {\TT \vdash \config{\TE, [A]} \Rightarrow \config{\TE, A}}
      \qquad

      \infer[\rname{TSelf}]
      { }
      {\TT \vdash \config{\TE, \sself} \Rightarrow \config{\TE, \TE(\sself)}}
      \qquad

      \infer[\rname{TVar}]
      { }
      {\TT \vdash \config{\TE, x} \Rightarrow \config{\TE, \TE(x)}}
      \\ \\

      \infer[\rname{TSeq}]
      {\TT \vdash \config{\TE, e_1} \Rightarrow \config{\TE_1, \tau_1} \\\\
      \TT \vdash \config{\TE_1, e_2} \Rightarrow \config{\TE_2, \tau_2}}
      {\TT \vdash \config{\TE, e_1; e_2} \Rightarrow \config{\TE_2, \tau_2}}

      \quad

      \begin{array}{c}
        \infer[\rname{TAssn}]
        {\TT \vdash \config{\TE, e} \Rightarrow \config{\TE', \tau}}
        {\TT \vdash \config{\TE, x = e} \Rightarrow \config{\TE'[x\mapsto \tau], \tau}}
        \\ \\

        \infer[\rname{TNew}]
        { }
        {\TT \vdash \config{\TE, \snew{A}} \Rightarrow \config{\TE, A}}
      \end{array}

      \quad

      \begin{array}{c}
        \infer[\rname{TDef}]
        { }
        {\TT \vdash \config{\TE, \sdef{A.m}{\sprem{x}{e}}} \Rightarrow \config{\TE, \snil}}
        \\ \\

        \infer[\rname{TType}]
        { }
        {\TT \vdash \config{\TE, \smethsig{A.m}{\tau_m}} \Rightarrow \config{\TE, \snil}}
      \end{array}

      \\ \\

      \infer[\rname{TApp}]
      {\TT \vdash \config{\TE, e_0} \Rightarrow \config{\TE_0, A} \\\\
      \TT \vdash \config{\TE_0, e_1} \Rightarrow \config{\TE_1, \tau} \\\\
      \TT(A.m) = \tau_1\rightarrow \tau_2 \\
      \tau \leq \tau_1
      }
      {\TT \vdash \config{\TE, e_0.m(e_1)} \Rightarrow \config{\TE_1, \tau_2}}

      \quad

      \infer[\rname{TIf}]
      {\TT \vdash \config{\TE, e_0} \Rightarrow \config{\TE', \tau} \\
      \TT \vdash \config{\TE', e_1} \Rightarrow \config{\TE_1, \tau_1} \\\\
      \TT \vdash \config{\TE', e_2} \Rightarrow \config{\TE_2, \tau_2}}
      {\TT \vdash \config{\TE, \sif{e_0}{e_1}{e_2}} \Rightarrow \config{\TE_1 \sqcup \TE_2, \tau_1 \sqcup \tau_2}}

    \end{array}
  \end{displaymath}
  
  \nocaptionrule{}
  \caption{Type Checking System.}
  \label{fig:types}
\end{figure*}

\emph{Expressions}~$e$ include values, variables~$x$, the special
variable~$\sself$, assignments $x=e$, and sequencing $e;e$. Objects
are created with $\snew{A}$. Conditional
$\sif{e_1}{e_2}{e_3}$ evaluates to $e_2$ unless $e_1$ evaluates to
$\snil$, in which case it evaluates to $e_3$. Method invocation
$e_1.m(e_2)$ is standard, invoking the $m$ method based on the run-time type
of $e_1$.

Expression $\sdef{A.m}{\sprem{x}{e}}$, defines
method $m$ of class $A$ as taking argument $x$ and returning $e$. (We
refer to $\sprem{x}{e}$ as a \emph{premethod}.) This form allows
methods to be defined anywhere during execution, thus it combines the
features of Ruby's \ruby{def} and \ruby{define\_method}.
As in Ruby, if $A.m$ is already defined, \ruby{def} overwrites
the previous definition.
The \ruby{def} expression itself evaluates to $\snil$.

Finally, expression
$\smethsig{A.m}{\tau\rightarrow\tau'}$ asserts that method $m$ of class
$A$ has domain type $\tau$ and range type $\tau'$. Types may be
either classes $A$ or $\snil$, the type of expression $\snil$. The
\ruby{type} expression overwrites the previous type of $A.m$, if any.
Like \name{}, there is no ordering dependency between \ruby{def} and
\ruby{type}---the only requirement is that a method's type must be
declared by the time the method is called. The \ruby{type} expression
itself evaluates to $\snil$.

\paragraph*{Type Checking.}

Figure~\ref{fig:types} gives the static type checking rules. As in \name{},
static type checking is performed at run time at method entry---thus
these rules will be invoked as a subroutine by the dynamic semantics (below). The
bottom part of Figure~\ref{fig:source-lang} defines the sets and maps
used in this figure and in the dynamic semantics.

In these rules,
$\TT$ is a \emph{type table} mapping class and method ids $A.m$ to
their corresponding types, as declared by \ruby{type}, and $\TE$ is a
\emph{type environment} mapping local variables to their types. These
rules prove judgments of the form
$\TT \vdash \config{\TE, e} \Rightarrow \config{\TE', \tau}$, meaning
with type table $\TT$, in type environment $\TE$, expression $e$ has
type $\tau$, and after evaluating $e$, the new type environment is
$\TE'$. Using an ``output'' type environment $\TE'$ allows us to
build a flow-sensitive type system, in which variables' types can change
at assignments. Note there is no output $\TT$ because the type table
does not change during static type checking---it only changes as the
program is executed by the dynamic semantics.

The type
rules are largely standard. \rname{TNil} and \rname{TObject} give
$\snil$ and instances the obvious types. \rname{TSelf} and
\rname{TVar} give $\sself$ and local variables their types according
to the type environment. Since none of these four expressions updates
the state, the output type environment is the
same as the input environment.
 
\rname{TSeq} types sequencing, threading the type environment from the
output of $e_1$ to the input of $e_2$. \rname{TAssn} types an
assignment, updating the output type environment to bind the assigned
variable $x$ to the type of the right-hand side. 
\rname{TNew} types
object creation in the obvious way. \rname{TDef} trivially type checks
method definitions. Notice we do not type check the method body; that
will happen at run time when the method is actually
called. \rname{TType} type checks a \ruby{type} expression, which has
no effect during type checking. Such expressions are only evaluated at
run-time, when they update the type table (see below).

One consequence of \rname{TType} is that our type system forbids
typing a method and then immediately calling it in the same method
body. For example, the following method body would fail to type check:
\begin{lstlisting}
def A.m = $\lambda$x.
  def B.m = ...;   # define B.m
  type B.m : ...;  # give B.m a type
  B.new.m          # type error! B.m not in type table
\end{lstlisting}
Here we type check \ruby{A.m}'s body at the first call to it, so the
\ruby{type} expression has not been run---and hence has not bound a
type to \ruby{B.m}---yet. Thus it is a type error to invoke \ruby{B.m}
in the method body.

\begin{figure*}[t!]
  \small

  \framebox{$\config{\cache, \TT, \DT, \DE, e, \DS} \rightarrow \config{\cache', \TT', \DT', \DE', e', \DS'}$}
  \begin{displaymath}
    \begin{array}{lrcll}
       \rname{ESelf} & 
                       \config{\cache, \TT, \DT, \DE, \sself, \DS} & \rightarrow &
                                                                           \config{\cache, \TT, \DT, \DE, \DE(\sself), \DS} \\

       \rname{EVar} & 
                      \config{\cache, \TT, \DT, \DE, x, \DS} & \rightarrow &
                                                                     \config{\cache, \TT, \DT, \DE, \DE(x), \DS} \\

      \rname{EAssn} &
                      \config{\cache, \TT, \DT, \DE, x=v, \DS} & \rightarrow &
                                                                               \config{\cache, \TT, \DT, \DE[x \mapsto v], v, \DS} \\

      \rname{ENew} & 
                     \config{\cache, \TT, \DT, \DE, \snew{A}, \DS} & \rightarrow &
                                                                           \config{\cache, \TT, \DT, \DE, [A], \DS} \\

      \rname{ESeq} &
                     \config{\cache, \TT, \DT, \DE, (v; e_2), \DS} & \rightarrow &
                                                                                 \config{\cache, \TT, \DT, \DE, e_2, \DS} \\

       \rname{EIfTrue} &
                        \config{\cache, \TT, \DT, \DE, \sif{v}{e_1}{e_2}, \DS} & \rightarrow &
                                                                                             \config{\cache, \TT, \DT, \DE, e_1, \DS} & \text{if}~v \neq \snil \\

       \rname{EIfFalse} &
                          \config{\cache, \TT, \DT, \DE, \sif{\snil}{e_1}{e_2}, \DS} & \rightarrow &
                                                                                                   \config{\cache, \TT, \DT, \DE, e_2, \DS} \\

      \rname{EDef} &
                     \config{\cache, \TT, \DT, \DE, \sdef{A.m}{\sprem{x}{e}}, \DS} & \rightarrow &
                                                                                         \config{\invcache{\cache}{A.m}, \TT, \DT[A.m \mapsto \lambda x.e], \DE, \snil, \DS} \\

      \rname{EType} &
                         \config{\cache, \TT, \DT, \DE, \smethsig{A.m}{\tau_m}, \DS} & \rightarrow &
                                                                                             \multicolumn{2}{l}{\config{\upgcache{(\invcache{\cache}{A.m})}{\TT'}, \TT', \DT, \DE, \snil, \DS}} \\
      \multicolumn{5}{r}{\TT' = \TT[A.m\mapsto \tau_m] \text{ and } A.m\not\in \textrm{TApp}(S)} \\
                                                                                             
       \rname{EAppMiss} &
                      \config{\cache, \TT, \DT, \DE, C[v_1.m(v_2)], \DS} & \rightarrow &
                                                                                    \multicolumn{2}{l}{\config{\cache', \TT, \DT, [\sself \mapsto v_1, x \mapsto v_2], e, (\DE, C)::\DS}} \\
      \multicolumn{5}{r}{\text{if}~A.m \not\in \dom(\cache) \text{ and } v_1 = [A] \text{ and } \DT(A.m) = \lambda x.e 
      \text{ and } \TT(A.m) = \tau_1 \rightarrow \tau_2 \text{ and } \stype{v_2} \leq \tau_1 \text{ and}} \\
      \multicolumn{5}{r}{\derivm = \left(\TT \vdash \config{[x \mapsto \tau_1, \sself \mapsto A], e}
      \Rightarrow \config{\TE', \tau}\right) \text{ holds and } \derivt = \left(\tau \leq \tau_2\right) \text{ holds and}}\\
      \multicolumn{5}{r}{\cache' = \cache[A.m\mapsto(\derivm,\derivt)]} \\

       \rname{EAppHit} &
                      \config{\cache, \TT, \DT, \DE, C[v_1.m(v_2)], \DS} & \rightarrow &
                                                                                    \multicolumn{2}{l}{\config{\cache, \TT, \DT, [\sself \mapsto v_1, x \mapsto v_2], e, (\DE, C)::\DS}} \\
      \multicolumn{5}{r}{\text{if}~A.m \in\dom(\cache) \text{ and } v_1 = [A] \text{ and } \DT(A.m) = \lambda x.e 
      \text{ and } \TT(A.m) = \tau_1 \rightarrow \tau_2 \text{ and } \stype{v_2} \leq \tau_1} \\

       \rname{ERet} &
                          \config{\cache, \TT, \DT, \DE', v, (\DE, C)::\DS} & \rightarrow &
                                                                               \config{\cache, \TT, \DT, \DE, C[v], \DS} \\




    \end{array}
  \end{displaymath}

  \begin{displaymath}
    \infer[\rname{EContext}]{
      \config{\cache, \TT, \DT, \DE, e, \DS} \rightarrow \config{\cache', \TT', \DT', \DE', e', \DS'} \\
      \nexists v_1,v_2,e' \;.\; e = (v_1.m(v_2)) \vee e = v_1 \vee e = C[e']
    }
    {\config{\cache, \TT, \DT, \DE, C[e], \DS} \rightarrow \config{\cache', \TT', \DT', \DE', C[e'], \DS'}}
  \end{displaymath}
  
  \nocaptionrule{}
  \caption{Dynamic Semantics.}
  \label{fig:semantics}
\end{figure*}

While we could potentially solve this problem with a more complex type
system, in our experience (Section~\ref{sec:experiments}) we have not
needed such a feature.

Next, \rname{TApp} types method invocation $e_0.m(e_1)$, where we look
up the method's type in $\TT$ based on the compile-time type of
$e_0$. (Note that since there is no inheritance, we need not search
the inheritance hierarchy to find the type of $A.m$.) Here subtyping
is defined as $\snil \leq A$ and $A \leq A$ for all $A$. Thus, as is
standard in languages with $\snil$, the type system may accept a
program that invokes a non-existent method of $\snil$ even though this
is a run-time error. However, notice that if $e_0$ evaluates to a
non-$\snil$ value, then \rname{TApp} guarantees $e_0$ has
method $m$.

Finally, \rname{TIf} types conditionals. Like Ruby, the guard $e_0$
may have any type. The type of the conditional is the least upper
bound of the types of the two branches, defined as $A \sqcup A = A$
and $\snil \sqcup \tau = \tau \sqcup \snil = \tau$. The output
environment of the conditional is the least upper bound of the output
environments of the branches, defined as
$(\TE_1 \sqcup \TE_2)(x) = \TE_1(x) \sqcup \TE_2(x)$ if
$x\in\dom(\TE_1) \wedge x\in\dom(\TE_2)$ and $(\TE_1 \sqcup \TE_2)(x)$
is undefined otherwise.

\paragraph*{Dynamic Semantics.}

Figure~\ref{fig:semantics} gives a small-step dynamic semantics for
our language. The semantics operates on \emph{dynamic configurations}
of the form $\config{\cache, \TT, \DT, \DE, e, \DS}$. The first two
components are the key novelties to support run-time static type
checking.  $\cache$ is a \emph{cache} mapping
$A.m$ to the type checking proofs for its method body (more
details below). $\TT$ is the \emph{type table}, which is updated at run
time by
calls to \ruby{type}. The last four components are standard. $\DT$ is
a \emph{dynamic class table} mapping $A.m$ to its premethod.
$\DE$ is the \emph{dynamic environment} mapping
local variables to values. $e$ is the expression being
reduced. Lastly, $\DS$
is a \emph{stack} of pairs $(\DE, C)$, where $\DE$ is the dynamic
environment and $C$ is the \emph{evaluation context} (defined in the
usual way) at a call site. The semantics pushes onto the stack at
calls and pops off the stack at returns.

The first seven rules in the semantics are standard. \rname{ESelf} and
\rname{EVar} evaluate $\sself$ and variables by looking them up in the
environment. \rname{EAssn} binds a variable to a value in the
environment. Notice that, like Ruby, variables can be written
without first declaring them, but it is an error to try to read a
variable that has not been written. \rname{ENew} creates a new
instance. Note that since objects do not have fields, we do not need a
separate heap. \rname{ESeq} discards the left-hand side of a sequence
if it has been fully evaluated. \rname{EIfTrue} reduces to the true
branch if the guard is non-$\snil$, and \rname{EIfFalse} reduces to
the false branch otherwise.
 
The next four rules are the heart of just-in-time static type checking.
Our goal is to
statically type check methods once at the first call, and then avoid
rechecking them unless something has changed. To formalize this
notion, we define the cache $\cache$ as a map from $A.m$ to a pair of typing
derivations $(\derivm, \derivt)$. Here $\derivm$ is a type checking
derivation from Figure~\ref{fig:types} for the body of $A.m$, and
$\derivt$ is a subtyping judgment showing that the type of $e$ is a
subtype of the declared return type. We need $\derivt$ because our
type system is syntax-directed and hence does not include a standalone
subsumption rule.

\rname{EDef} reduces to $\snil$, updating the dynamic class table to
bind $A.m$ to the given premethod along the way. Recall that we allow
a method to be redefined with \ruby{def}. Hence we need to
\emph{invalidate} anything in the cache relating to $A.m$ so that $A.m$
will be checked the next time it is called. More precisely:

\begin{definition}[Cache invalidation]
  We write $\invcache{\cache}{A.m}$ to indicate a new cache that is
  the same as $\cache$, except $A.m$ has been invalidated, meaning:
  \begin{enumerate}
    \item Any entries with $A.m$ as the key are removed.
    \item Any entries with a $\derivm$ that apply \rname{TApp} with
      $A.m$ are removed.
  \end{enumerate}
\end{definition}
Thus, in \rname{EDef}, the output cache is the same as the input cache
but with $A.m$ invalidated.

\rname{EType} also reduces to $\snil$, updating the type table to be
$\TT'$, which is the same as $\TT$ but with new type information for
$A.m$. As with \rname{EDef}, we invalidate $A.m$ in the
cache. However, there is a another subtlety. Recall that cached typing
derivations $\derivm$ include the type table $\TT$. This is
potentially problematic, because we are changing the type table to
$\TT'$. However, cache invalidation removes any derivations that
refer to $A.m$. Hence, cached type derivations that use $\TT$ can
safely use $\TT'$. Formally, we define:

\begin{definition}[Cache upgrading]
  We write $\upgcache{\cache}{\TT'}$ to indicate a new cache that is
  the same as $\cache$, except the type table in every derivation is
  replaced by $\TT'$.
\end{definition}
Thus, in \rname{EType}, the output cache is upgraded to the new type
table after invalidation.


The next two rules use the type cache. Both rules evaluate a method
call in a context, written $C[v_1.m(v_2)]$; we will discuss the other
rule for contexts shortly. In both rules, the receiver $v_1$ is a
run-time object $[A]$. \rname{EAppMiss} applies when $A.m$ is not in
the cache. In this case, we look up the type of $A.m$ in $\TT$,
yielding some type $\tau_1\rightarrow\tau_2$. We type check the method
body $e$ in an environment in which formal variable $x$ is bound to
$\tau_1$ and $\sself$ is bound to $A$, yielding a derivation
$\derivm$. We check that the resulting type $\tau$ of $e$ is a subtype
of the declared type $e_2$, with subtyping derivation
$\derivt$. Finally, we check that the run-time type of $v_2$---defined
as $\stype{\snil} = \snil$ and $\stype{[A]} = A$---is a subtype of
$\tau_1$. If all this holds, then it is type-safe to call the
method. Hence we update the cache with the method's typing derivations
and start evaluating the method body, pushing the context $C$ and the
environment $\DE$ on the stack.

\rname{EAppHit} is similar but far simpler. This rule applies when
$A.m$ is in the cache. In this case we know its method body has been
successfully type checked, so we need only check that the run-time
type of $v_2$ is a subtype of the declared domain type of $v_1$. If so,
we allow the method call to proceed.

However a method is called, the return, handled by \rname{ERet}, is
the same. This rule applies when an expression has been fully
evaluated and is at the top level. In this case, we pop the stack,
replacing $\DE'$ with $\DE$ from the stack and plugging the value $v$
into the context $C$ from the stack.

Finally, \rname{EContext} takes a step in an subexpression inside a
context $C$. This rule only applies if the subexpression is not a
method call (since that case is handled by \rname{EApp*}, which must
push the context on the stack) and not a fully evaluated value (which
is handled by \rname{ERet}, which must pop the context from the
stack). We also do not allow the subexpression to itself be a context,
since that could cause \rname{EApp*} and \rname{ERet} to misbehave.



\paragraph*{Soundness.}

Our type system forbids invoking non-existent methods of
objects. However, there are three kinds of errors the type system does
not prevent: invoking a method on $\snil$; calling a method whose body
does not type check at run time; and calling a method that has a type
signature but is itself undefined. (We could prevent the latter error
by adding a side condition to \rname{TApp} that requires the method to
be defined, but we opt not to to keep the formalism slightly simpler.)
To state a soundness theorem, we need to account for these cases,
which we do by extending the dynamic semantics with rules that reduce
to $\blame$ in these three cases. After doing so, we can state soundness:

\begin{theorem}[Soundness]
  If
  $\emptyset \vdash \config{\emptyset, e} \Rightarrow \config{\TE', \tau}$
  then either $e$ reduces to a value, $e$ reduces to $\blame$, or $e$
  diverges.
\end{theorem}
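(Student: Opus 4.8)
The plan is to prove soundness in the standard syntactic style of progress and preservation, adapted to the fact that type checking is itself a reduction-time activity driven by \rname{EAppMiss}. The core of the argument is a well-formedness judgment on whole dynamic configurations $\config{\cache, \TT, \DT, \DE, e, \DS}$ that I will show holds of the initial configuration $\config{\emptyset, \emptyset, \emptyset, \emptyset, e, \cdot}$ and is preserved by every step. This judgment bundles four conditions: (i) \emph{current-expression typing}, i.e.\ $\TT \vdash \config{\TE, e} \Rightarrow \config{\TE', \tau}$ for some $\TE, \TE', \tau$ with $\tau$ below the type the continuation expects; (ii) \emph{environment consistency} $\DE \models \TE$, meaning $\stype{\DE(x)} \leq \TE(x)$ for each $x \in \dom(\TE)$ and likewise for $\sself$; (iii) \emph{cache consistency}, that every entry $\cache(A.m) = (\derivm, \derivt)$ is a genuine derivation from Figure~\ref{fig:types} for the body $\DT(A.m)$ checked against the current $\TT$ with domain and range drawn from $\TT(A.m)$, paired with a valid $\derivt$; and (iv) \emph{stack typing}, a continuation-typing invariant stating that each frame's context $C$, with its hole filled by a value of the type returned into it, type checks under the current $\TT$ to the type expected by the frame below. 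Because the type rules are syntax-directed and carry no subsumption rule, condition (i) must include explicit subtyping slack $\tau \leq \tau_{\mathrm{exp}}$ to absorb the least upper bounds introduced by \rname{TIf}, the $\snil \leq A$ subtyping, and the return-subtyping recorded in $\derivt$.

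Second, I would prove \textbf{preservation} by case analysis on the step taken. The local rules (\rname{ESelf}, \rname{EVar}, \rname{EAssn}, \rname{ENew}, \rname{ESeq}, \rname{EIfTrue}, \rname{EIfFalse}, \rname{EContext}) are routine: each matches one or two type rules, and I need only a small \emph{value-typing} lemma (a value $v$ standing where the static type was $\tau$ satisfies $\stype{v} \leq \tau$) together with the fact that $\sqcup$ is a least upper bound, so that env consistency and the subtyping slack survive branch selection and environment threading. The substantive cases are the four central rules. For \rname{EAppMiss} the premises that $\derivm$ and $\derivt$ hold give that the body type checks under $\TT$ with $\sself : A$ and $x : \tau_1$, while env consistency at entry follows from $\stype{v_1} = A$ and $\stype{v_2} \leq \tau_1$; I then extend stack typing by the pushed frame $(\DE, C)$, using the caller's \rname{TApp} instance to supply the expected return type $\tau_2$. \rname{EAppHit} is identical except that the body derivation is read off the cache, so its validity is exactly condition (iii). \rname{ERet} plugs $v$ into $C$, whose well-typedness is precisely what stack typing records.

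Third, and where most of the work concentrates, I would handle \rname{EDef} and \rname{EType}, which mutate $\DT$ and $\TT$ and must re-establish conditions (iii) and (iv). The key tools are two \emph{locality} lemmas: a derivation for $B.n$ depends on $\DT$ only through the single body $\DT(B.n)$ it checks, and on $\TT$ only through the method types read in its \rname{TApp} steps. Hence changing $\DT(A.m)$ can invalidate only the derivation \emph{for} $A.m$ (removed by clause~1 of cache invalidation), and changing $\TT(A.m)$ to $\TT'$ can invalidate only the derivation for $A.m$ (clause~1) plus any derivation applying \rname{TApp} at $A.m$ (clause~2); every surviving derivation mentions neither, so replacing $\TT$ by $\TT'$ throughout---cache upgrading---yields a valid derivation under $\TT'$. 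These lemmas are exactly what turn the engineering of the two definitions into preservation of condition (iii).

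Finally, I expect the \textbf{main obstacle} to be the interaction between the mutable type table and the suspended frames (and the remainder of the current body), which cache upgrading does not touch. A context $C$ on the stack may contain a pending call whose static typing referred to $\TT(A.m)$; if \rname{EType} changed $\TT(A.m)$ while that frame was suspended, stack typing (iv) and current-expression typing (i) would break. The side condition $A.m \notin \mathrm{TApp}(\DS)$ on \rname{EType} is precisely the guard against this, so the crux of the \rname{EType} case is to show that under it every pending context type checks identically under $\TT$ and $\TT'$ (again by locality); precisely characterizing this condition, and showing it is always respected in a well-formed run rather than stranding execution, is the delicate point. For \textbf{progress} I would inspect the well-formed current redex and show it is either a value under an empty stack or matches the left-hand side of some rule; condition~(i) rules out the would-be-stuck case of an argument whose run-time type violates the callee's domain, and the three genuinely uncatchable errors---a call on $\snil$, a call whose freshly checked body fails to type check after a redefinition, and a call to a typed-but-undefined method---are exactly the configurations matching no ordinary rule, covered by the added $\blame$ rules. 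Chaining preservation and progress from the well-formed initial configuration then yields the stated trichotomy.
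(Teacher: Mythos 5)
Your proposal follows essentially the same route as the paper's own proof: a progress-and-preservation argument whose preservation invariant bundles expression typing with subtyping slack against a type stack, environment consistency, stack consistency, and cache consistency, with the \rname{EDef}/\rname{EType} cases discharged by exactly the locality reasoning that cache invalidation and upgrading are designed to support, and the $A.m \notin \mathrm{TApp}(\DS)$ side condition used to preserve the typing of suspended frames. The paper states the locality reasoning inline within those two cases rather than as separate lemmas, but the substance is identical, so there is nothing further to flag.
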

We show soundness using a standard progress and preservation
approach. The key technical challenge is preservation, in which we
need to show that not only are expression types preserved, but also
the validity of the cache and types of contexts pushed on the stack.
The proof can be found in Appendix~\ref{sec:appendix}.

\section{Implementation}
\label{sec:implementation}

\name{} is implemented using a combination of Ruby and OCaml. On the
OCaml side, we use the Ruby Intermediate Language (RIL)
\cite{furr:dls09-ril} to parse input Ruby files and translate them to
control-flow graphs (CFG) on which we perform type checking. On the
Ruby side, we extend
RDL~\cite{RDL}, a contract system
for Ruby, to perform static type checking. We next discuss the major
challenges of implementing \name{}.

\paragraph*{RIL.} RIL is essentially the front-end of Diamondback Ruby
(DRuby)~\cite{furr:oops09,DRuby}. Given
an input Ruby program, RIL produces a CFG that simplifies away many of
the tedious features of Ruby, e.g., multiple forms of conditionals. We
modified DRuby so it emits the RIL CFG as a JSON file and then
exits. When loading each application file at run-time, we read the 
corresponding JSON file and store a mapping from class and method
names and positions (file and line number) to the
JSON CFG. At run-time we look up CFGs in this map to perform
static type checking.

\paragraph*{RDL and Type Checking.} Like standard RDL, \name{}'s
\ruby{type} annotation stores type information in a map and wraps the
associated method to intercept calls to it.  We should emphasize that
RDL does not perform any static checking on its own---rather, it solely
enforces contracts dynamically.
In \name{}, when a wrapped method
is called, \name{} first checks to see if it has already been type
checked. If not, \name{} retrieves the method's CFG and type and then
statically checks that the CFG matches the given type.

\name{} uses RDL's type language, which includes nominal types,
intersection types, union types, optional and variable length
arguments, block (higher-order method) types, singleton types,
structural types, a \ruby{self} type, generics, and types for
heterogenous arrays and hashes. \name{} supports all of these kinds of
types except structural types, \ruby{self} types, heterogeneous collections,
and some variable length arguments.
In addition, \name{} adds support for both
instance field types (as seen in Figure~\ref{fig:struct}) and class
field types.

There is one slight subtlety in handling union types: If in a method
call the receiver has a union type, \name{} performs type checking
once for each arm of the union and the unions the possible return
types. For example if in call $e.m(\ldots)$ the receiver has type $A
\cup B$, then \name{} checks the call assuming $e$ has type of $A.m$,
yielding a return type $\tau_A$; checks the call assuming $B.m$,
yielding return type $\tau_B$; and then sets the call's return type to
$\tau_A \cup \tau_B$.

\paragraph{Eliminating Dynamic Checks.}
Recall the \rname{EApp*} rules dynamically check that a method's
actual arguments have the expected types before calling a statically
typed method. This check ensures that an untrusted caller cannot
violate the assumptions of a method. However, observe that if the immediate
caller is itself statically checked, then we know the arguments are
type-safe. Thus, as a performance optimization, \name{}
only dynamically checks arguments of statically typed methods if the
caller is itself not statically checked. As a further optimization,
\name{} also does not dynamically check calls from Ruby standard
library methods or the Rails framework, which are assumed to be type-safe.
The one exception is that \name{} does dynamically check types for the
Rails \ruby{params} hash, since those values come from the user's
browser and hence are untrusted.


\paragraph*{Numeric Hierarchy.}

Ruby has a Numeric tower that provides several related types for
numbers. For example,
\ruby{Fixnum} $<$ \ruby{Integer} $<$ \ruby{Numeric} and
\ruby{Bignum} $<$ \ruby{Integer} $<$ \ruby{Numeric}.
Adding two \ruby{Fixnum}s normally results in another \ruby{Fixnum},
but adding two large \ruby{Fixnum}s could result 
in a \ruby{Bignum} in the case of numeric overflow.
To keep the type checking system simple,
\name{} omits the special overflow case
and does not take \ruby{Bignum} into consideration. (This could be
addressed by enriching the type system \cite{St-Amour:2012:TNT:2187125.2187146}.)
Numeric overflow does not occur in our experiments.

\paragraph{Code Blocks.}

As mentioned earlier, Ruby code blocks are
anonymous functions delimited by \ruby{do}$\ldots$\ruby{end}. 
\name{} allows methods that take code block arguments to be annotated
with the block's type. For example:
\begin{lstlisting}[numbers=none]
type :m, "() { (T) -> U } -> nil"
\end{lstlisting}
indicates that \ruby{m} takes no regular arguments; one code block
argument where the block takes type \ruby{T} and returns type
\ruby{U}; and \ruby{m} itself returns type \ruby{nil}.

There are two cases involving code blocks that we need to
consider. First, suppose \name{} is statically type checking a call
\ruby{m() do $\mid$x$\mid$ body end}, and \ruby{m} has the type given
just above. Then at the call, \name{} statically checks that the code
block argument matches the expected type, i.e., assuming \ruby{x} has
type \ruby{T}, then \ruby{body} must produce a value of type
\ruby{U}. Second, when statically type checking \ruby{m}
itself, \name{} should check that calls to the block are type correct. Currently
this second case is unimplemented as it does not arise in our
experiments.


Recall from above that \name{} sometimes needs to dynamically check
the arguments to a statically typed method. While this test is easy to do
for objects, it is hard to do for code blocks, which would require
higher-order contracts~\cite{ff:ho-contracts}. Currently \name{} does
not implement this higher order check, and simply assumes code block
arguments are type safe. 
Also, \name{} currently assumes the \ruby{self} inside a code block is
the same as in the enclosing method body. This assumption holds in our
experiments, but it can be violated using \ruby{instance\_eval} and
\ruby{instance\_exec}~\cite{strickland:dls12}.
In the future, we plan to address this limitation by allowing the
programmer to
annotate the \ruby{self} type of code blocks.

\paragraph*{Type Casts.}

While \name{}'s type system is quite powerful, it cannot type check
every Ruby program, and thus in some cases we need to insert type
casts. \name{} includes a method \ruby{o.rdl\_cast(t)} that casts
\ruby{o}'s type to \ruby{t}. After such a call, \name{} assumes that
\ruby{o} has type \ruby{t}. At run-time, the call
dynamically checks that \ruby{o} has the given type.

In our experience, type casts have two main uses. First, sometimes
program logic dictates that we can safely downcast an object. For
example, consider the following method from one of our experiments:
\begin{lstlisting}[numbers=none]
def self.load_cache
  f = datafile_path([``cache'', ``countries''])
  t = Marshal.load(File.binread(f))
  @@cache ||= t.rdl_cast(``Hash<String, %any>'')
end
\end{lstlisting}

\ruby{Marshal.load} returns the result of converting its serialized
data argument into a Ruby object of arbitray type. However, in our
example, the argument passed to \ruby{Marshal.load} is always an application
data file that will be converted to the annotated \ruby{Hash}.

Second, by default \name{} gives instances of generic classes their
``raw'' type with no type parameters. To add parameters, we use type
casts, as in the following code:
\begin{lstlisting}[numbers=none]
a = []                      # a has type Array
a.rdl_cast("Array<Fixnum>") # cast to Array<Fixnum>
a.push(0)                   # ok
a.push("str")               # type error due to cast
\end{lstlisting}
Here without the type annotation the last line would succeed; with the
annotation it triggers a type error. Note that when casting an array or
  hash to a generic type, \ruby{rdl\_cast} iterates through the
  elements to ensure they have the given type.





\paragraph*{Modules.}

Ruby supports mixins via \emph{modules}, which are collections of
methods that can be added to a class.
%
Recall that \name{} caches which methods have been statically type
checked. Because a
module can be mixed in to multiple different classes---and can
actually have different types in those different classes---we need to
be careful that module method type checks are cached via where they
are mixed in rather than via the module name.

For example, consider the following code, where the method \ruby{foo}
defined in module \ruby{M} calls \ruby{bar}, which may vary depending
on where \ruby{M} is mixed in:
\begin{lstlisting}
module M def foo(x) bar(x) end end
class C; include M; def bar(x) x + 1 end end
class D; include M; def bar(x) x.to_s end end
\end{lstlisting}
Here method \ruby{foo} returns \ruby{Fixnum} when mixed into \ruby{C}
and \ruby{String} when mixed into \ruby{D}. Thus, rather that track
the type checking status of \ruby{M\#foo}, \name{} separately tracks
the statuses of \ruby{C\#foo} and \ruby{D\#foo}.







\paragraph*{Cache Invalidation.}

Recall from Section~\ref{sec:formalism} that \name{} needs to
invalidate portions of the cache when certain typing assumptions
change. While \name{} currently does not support cache invalidation in
general, it does support one important case. In Rails
\emph{development mode}, Rails automatically reloads modified files
without restarting, thus redefining the methods in those files but
leaving other methods intact~\cite{railsguides}. In Rails development
mode, \name{} intercepts the Rails reloading process and performs
appropriate cache invalidation. 
More specifically, when a method is called, if there is a difference
between its new and old method body (which we check using the RIL
CFGs), we invalidate the method and any methods that depend on it.
We also maintain a list of methods defined in each class, and 
when a class is reloaded we invalidate
dependencies of any method that has been removed.
In the next section, we report on an
experiment running a Rails app under \name{} as it is updated.

We plan to add more general support for cache invalidation in future
work. There are two main cases to consider. The first is when a method
is redefined or is removed (which never happens in our experiments
except in Rails development mode). Ruby provides two methods,
\ruby{method\_added} and \ruby{method\_removed}, that can be used to
register callbacks when the corresponding actions occur, which could
be used for cache invalidation.

The second case of cache invalidation is method's type changes. However,
in RDL and \name{}, multiple calls to \ruby{type} for the same method
are used to create intersection types. For example, the core
library \ruby{Array\#[]} method is given its type with the following code:
\begin{lstlisting}
  type Array, :[], '(Fixnum or Float) -> t'
  type Array, :[], '(Fixnum, Fixnum) -> Array<t>'
  type Array, :[], '(Range<Fixnum>) -> Array<t>' 
\end{lstlisting}
meaning
if given a \ruby{Fixnum} or \ruby{Float}, method \ruby{Array\#[]} returns the array
contents type; and, if given a pair of \ruby{Fixnum}s or a
\ruby{Range$<$Fixnum$>$}, it returns an array.

In this setting, we cannot easily
distinguish adding a new arm of an intersection type from replacing a
method type. Moreover, adding a new arm to an intersection type should
not invalidate the cache, since the other arms are still in
effect. Thus, full support of cache invalidation will likely require
an explicit mechanism for replacing earlier type definitions.

\section{Experiments}
\label{sec:experiments}


\begin{table*}[t!]
\begin{center}
  \begin{tabular}{|l|r||r|r|r|r|r|r|r||r|r|r|r|}
    \hline
    & & \multicolumn{3}{c|}{\textbf{Static types}} & \multicolumn{2}{c|}{\textbf{Dynamic types}} & & & \multicolumn{4}{c|}{\textbf{Running time (s)}} \\ \cline{3-7}\cline{10-13}
    \textbf{App} & \textbf{LoC} & \textbf{Chk'd} & \textbf{App} & \textbf{All} & \textbf{Gen'd} & \textbf{Used} & \textbf{Casts} & \textbf{Phs}& \textbf{Orig} & \textbf{No\$} & \textbf{Hum} & \textbf{Or. Ratio} \\ \hline
    \sname{Talks-1/4/2013} & 1,055 & 111 & 201 & 363 & 990 & 45 & 31 & 1 & 162 & 1,590 & 256 & 1.6$\times$ \\
    \sname{Boxroom-1.7.1} & 854 & 127 & 221 & 306 & 534 & 93 & 17 & 1 & 263 & 705 & 327 & 1.2$\times$ \\
    \sname{Pubs-1/12/2015} & 620 & 47 & 86 & 171 & 445 & 33 & 13 & 1 & 72.0 & 4,470 & 217 & 3.0$\times$ \\
    \sname{Rolify-4.0.0}  & 84 & 14 & 24 & 71 & 26 & 2 & 15 & 12 & 5.63 & 7.79 & 6.71 & 1.2$\times$ \\
    \sname{CCT-3/23/2014} & 172 & 23 & 27 & 75 & 6 & 3 & 6 & 1 & 3.06 & 78.2 & 17.4 & 5.7$\times$ \\
    \sname{Countries-1.1.0} & 227 & 33 & 40 & 111 & 0 & 0 & 22 & 1 & 1.02 & 18.1 & 4.62 & 4.5$\times$ \\
    \hline
  \end{tabular}
 \end{center}
\nocaptionrule{}
 \caption{Type checking results.}
 \label{table:type-check-results}
\end{table*}

We evaluated \name{} by applying it to six Ruby apps:
\begin{itemize}
\item \sname{Talks}\footnote{\url{https://github.com/jeffrey-s-foster/talks}}
  is a Rails app, written by the second author, for publicizing talk
  announcements. Talks has been in use in the UMD CS department since
  February 2012.
\item \sname{Boxroom}\footnote{\url{http://boxroomapp.com}} is a Rails
  implementation of a simple file sharing interface.
\item \sname{Pubs} is a Rails app, developed several years ago by the
  second author, for managing lists of publications.
\item
  \sname{Rolify}\footnote{\url{https://github.com/RolifyCommunity/rolify}}
  is a role management library for Rails. For this evaluation, we
  integrated \sname{Rolify} with \sname{Talks} on the \ruby{User} resource.
\item Credit Card Transactions (\sname{CCT})\footnote{\url{https://github.com/daino3/credit_card_transactions}} is a library
  that performs simple credit card processing tasks.
\item
  \sname{Countries}\footnote{\url{https://github.com/hexorx/countries}}
  is an app that provides useful data about each country.
\end{itemize}
We selected these apps for variety rather than for being
representative. We chose these apps because their source code is
publicly available (except \sname{Pubs}); they work with the latest
versions of Ruby and RDL; and they do not rely heavily on other
packages. Moreover, the first three apps use Rails, which is an
industrial strength web app framework that is widely deployed;
the next two use various metaprogramming styles
in different ways than Rails; and the last one does not use
metaprogramming, as a baseline.



Table~\ref{table:type-check-results} summarizes the results of
applying \name{} to these apps. On the left we list the app name,
version number or date (if no version number is available), and lines
of code as measured with \ruby{sloccount}~\cite{sloccount}.
For the Rails apps, we ran \ruby{sloccount} on
all ruby files in the model, controller, helper, and mailer directories.
We do not include lines of code for views, as we do not type check views.
For \sname{Countries} and \sname{CCT}, we ran \ruby{sloccount}
on all files in the \ruby{lib} directory.
For \sname{Rolify}, we only statically checked several
methods, comprising 84 lines of code, that use \ruby{define\_method}
in an interesting way.


\paragraph*{Type Annotations.}

For all apps, we used common type annotations from RDL for the Ruby
core and standard libraries.
For several apps, we also added type annotations for
third-party libraries and for the Rails framework. We trusted the
annotations for all these libraries, i.e., we did not statically type
check the library methods' bodies.

We also added code to dynamically generate types for metaprogramming
code. For Rails, we added code to dynamically generate types for model
getters and setters based on the database schema; for finder methods
such as \ruby{find\_\-by\_\-name} and \ruby{find\_all\_by\_password} (the
method name indicates which field is being searched); and for Rails
associations such as \ruby{belongs\_to}.


In Figure~\ref{fig:define_method}, we showed code we added to
\sname{Rolify} to generate types for a method created by calling
\ruby{de\-fine\_dy\-nam\-ic\_meth\-od}. Calling \ruby{de\-fine\_\-dy\-nam\-ic\_\-meth\-od}
also dynamically creates another method,
\ruby{is\_\#\{role\_\-name\}\_of(arg)?}, which we also provide types
for in the \ruby{pre} block.



In \sname{CCT}, we used the code in Figure~\ref{fig:struct} to generate types
for \ruby{Struct} getters and setters.



Finally, we wrote type annotations for the app's own methods that were
included in the lines of code count in
Table~\ref{table:type-check-results}. We marked those methods to
indicate \name{} should statically type check their bodies.
Developing these annotations was fairly straightforward, especially
since we could quickly detect incorrect annotations by running
\name{}.

\paragraph*{Type Checking Results.}

For each program, we performed type checking while running unit tests
that exercised all the type-annotated app methods.  For \sname{Talks}
and \sname{Pubs}, we wrote unit tests with the goal of covering all
application methods. For
\sname{Boxroom}, we used its unit tests on models but wrote our own
unit tests on controllers, since it did not have controller tests. For \sname{Rolify}, we wrote
a small set of unit tests for the dynamic method definition feature. 
For \sname{CCT} and \sname{Countries},
we used the unit tests that came with those apps.

In all cases, the app methods type check correctly in \name{}; there
were no type errors. The middle group of columns summarizes more
detailed type checking data.

The ``Static types'' columns report data on static type
annotations. The count under ``Chk'd'' is the number of type
annotations for the app's methods whose bodies we statically type checked. 
The count under ``App'' is that number plus the number of types for
app-specific methods with (trusted) static type annotations, e.g.,
some Rails helper functions have types that we do not currently
dynamically generate.
The count under ``All'' reports the total number of
static type annotations we used in type checking each app. This
includes the ``App'' count plus standard, core, and third-party
library type annotations for methods referred to in the app.

The ``Dynamic types'' columns report the number of types that were
dynamically generated (``Gen'd'') and the number of those that were
actually used during type checking (``Used''). These numbers differ
because we tried to make the dynamic type information general rather
than app-specific, e.g., we generate both the getter and setter for
\ruby{belongs\_to} even if only one is used by the app.

These results show that having types for methods generated by
metaprogramming is critical for successfully typing these
programs---every app except \sname{Countries} requires at least a few,
and sometimes many, such types.

The ``Casts'' column reports the number of type casts we
needed to make these programs type check; essentially this measures
how often \name{}'s type system is overly conservative. The results
show we needed a non-trivial but relatively small number of
casts. All casts were for the reasons discussed in
Section~\ref{sec:implementation}: downcasting and
generics. 



The ``Phs'' column in Table~\ref{table:type-check-results} shows
the number of type checking phases under \name{}. Here a \emph{phase}
is defined as a sequence of type annotation calls with no intervening
static type checks, followed by a sequence of static type checks with
no intervening annotations. We can see that almost all apps have only
a single phase, where the type annotations are executed before any
static type checks. Investigating further, we found this is due to the
way we added annotations. For example, we set up our Rails apps so the
first loaded application file in turn loads all type annotation
files. In practice the type annotations would likely be spread
throughout the app's files, thus increasing the number of
phases.


\sname{Rolify} is the only application with multiple phases. Most of
the phases come from calling \ruby{define\_dynamic\_method}, which
dynamically defines other methods and adds their type annotations.
The other phases come from the order in which the type annotation
files are required---unlike the Rails apps, the \sname{Rolify} type annotation files
are loaded piecemeal as the application loads.


\paragraph*{Performance.}
The last four columns of Table~\ref{table:type-check-results} report the
overhead of using \name{}. The ``Orig'' column shows the running time
without \name{}. The next two columns report the running time with
\name{}, with caching disabled (``No\$'') and enabled
(``Hum''). The last column lists the ratio of the ``Hum'' and
``Orig'' column.

For \sname{Talks}, \sname{Boxroom}, and \sname{Pubs}, we measured the
running time of a client script that uses \ruby{curl} to connect to
the web server and exercise a wide range of functionality.  For
\sname{CCT}, we measured the time for running its unit tests 100
times.  For \sname{Countries} and \sname{Rolify}, we measured the time
for running the unit tests once (since these take take much more time
than \sname{CCT}'s tests). For all apps, we performed each measurement
three times and took the arithmetic mean.

These results show that for the Rails apps, where IO is significant,
\name{} slows down performance from 24\% to 201\% (with
caching enabled).
We think these are reasonable results for an early
prototype that we have not spent much effort optimizing. Moreover,
across all apps, the ratios
are significantly better than prior systems that mix static and
dynamic typing for Ruby~\cite{an:popl11,ren:oops13}, which report
orders of magnitude slowdowns.

Investigating further, we found that the main \name{} overhead arises
from intercepting method calls to statically type checked
methods. (Note the interception happens regardless of the cache
state.) The higher slowdowns for \sname{CCT} and \sname{Countries}
occur because those applications spend much of their time in
code with intercepted calls, while the other applications spend most
of their time in framework code, whose methods are not intercepted.
We expect performance can be improved with further engineering effort.

We can also see from the results that caching is an important
performance optimization: without caching, performance slows down
1.4$\times$ to 62$\times$. We investigated \sname{pubs}, the app with
the highest no-caching slowdown, and found that while running the application
with large array inputs, certain application methods are called more
than 13,000 times while iterating through the large arrays.
This means that each of these application methods are statically
type checked more than 13,000 times when caching is disabled.



\paragraph*{Type Errors in \sname{Talks}.} We downloaded many earlier
versions of \sname{Talks} from its \ruby{github} repository and ran
\name{} on them using mostly the same type annotations as for the latest
version, changed as necessary due to program changes.  Cumulatively,
we found six type errors that were introduced and later removed as
\sname{Talks} evolved. Below the number after the date indicates which
checkin it was, with 1 for the first checkin of the day, 2 for the
second, etc.
\begin{itemize}
\item 1/8/12-4: This version misspells \ruby{compute\_edit\_fields}
  as \ruby{copute\_edit\_fields}. \name{} reported this error because
  the latter was an unbound local variable and was also not a valid method.
\item 1/7/12-5: Instead of calling
  \ruby{@list.talks.upcoming.sort\{$|$ a, b $|$ ...\}}, this version calls
  \ruby{@list.talks.upcoming\{$|$ a, b $|$ ...\}} (leaving off the
  \ruby{sort}). \name{} detects this error because \ruby{upcoming}'s
  type indicates it does not take a block. Interestingly, this error
  would not be detected at run-time by Ruby, which simply ignores
  unused block arguments.
\item 1/26/12-3: This version calls \ruby{user.subscribed\_talks(true)},
  but \ruby{subscribed\_talks}'s argument is a \ruby{Symbol}.
\item 1/28/12: This version calls \ruby{@job.handler.object}, but 
  \ruby{@job.handler} returns a \ruby{String}, which does not have an
   \ruby{object} method.
\item 2/6/12-2: This version uses undefined variable \ruby{old\_talk}.
Thus, \name{} assumes \ruby{old\_talk} is a no-argument method and attempts to
look up its type, which does not exist.
\item 2/6/12-3: This version uses undefined variable \ruby{new\_talk}
\end{itemize}

We should emphasize that although we expected there would be type
errors in \sname{Talks}, we did not know exactly what they were or
what versions they were in. While the second author did write
\sname{Talks}, the errors were made a long time ago, and the second
author rediscovered them independently by running \name{}.

\paragraph*{Updates to Talks}
Finally, we performed an experiment in which we launched one version
of \sname{Talks} in Rails development mode and then updated the code
to the next six consecutive versions of the app.  (We skipped
versions in which none of the Ruby application files changed) Notice
that cache invalidation is particular useful here, since in typical
usage only a small number of methods are changed by each update.


In more detail, after launching the initial version of the app, we
repeated the following sequence six times: Reset the database (so that
we run all versions with the same initial data);
run a sequence of curl commands that access the same \sname{Talks}
functionalities as the ones used to measure the running time of
\sname{Talks} in Table~\ref{table:type-check-results}
; update the code to the next version; and repeat.



Table~\ref{table:live-update-exp} shows the results of our experiment.
The ``$\Delta$
Meth'' column lists the number of methods whose bodies or types were
changed compared to the previous version. 
Note there are no removed methods in any of these versions.
The ``Added'' column lists the number of methods added; such methods
will be checked when they are called for the first time but 
do not cause any cache invalidations.
The ``Deps'' column counts the number of \emph{dependent} methods that call
one or more of the changed methods. These methods plus the changed
methods are those whose previous static type check are invalidated by
the update. The last column, ``Chk'd,''
reports how many methods are newly or re-type checked after the
update. Currently, \name{} always rechecks Rails helper methods, due
to a quirk in the Rails implementation---the helper methods' classes
get a new name each time the helper file is reloaded, causing \name{}
to treat their methods as new. Thus (except for the first line, since
this issue does not arise on the first run), we list two numbers in the
column: the first with all rechecks, including the helper methods, and
the second excluding the helper methods.

These results show that in almost all cases, the second number in
``Chk'd'' is equal to the sum of the three previous columns. There is
one exception: in 8/24/12/-1, there 14 rechecked methods but 18
changed/added/dependent methods. We investigated and found that the 14
rechecks are composed of six changed methods that are rechecked once;
two changed methods that are rechecked twice because they have
dependencies whose updates are interleaved with calls to those
methods; one added method that is checked; and three dependent methods
that are rechecked. The remaining added method is not called by the
curl scripts, and the remaining dependent methods are also changed
methods (this is the only version where there is overlap between the
changed and dependent columns).

Finally,
as there are no type errors in this sequence of updates, we confirmed
that this streak of updates type checks under \name{}.
 
\begin{table}[t!]
\begin{center}
  \begin{tabular}{|l|r|r|r|r|r|r||r|r|r|r|}
    \hline
     \textbf{Version} & \textbf{$\Delta$ Meth} & \textbf{Added} & \textbf{Deps} & \textbf{Chk'd}\\ \hline
     5/14/12   & N/A & N/A & N/A & 77 \\
     7/24/12   & 1 & - & 4 & 15\;/\;5 \\
     8/24/12-1 & 8 & 2 & 8 & 24\;/\;14 \\
     8/24/12-2 & - & 1 & - & 11\;/\;1 \\
     8/24/12-3 & 1 & 1 & - & 12\;/\;2 \\
     9/14/12   & 1 & - & - & 15\;/\;1 \\
     1/4/13 & 4 & - & - & 13\;/\;4 \\
    \hline
  \end{tabular}
 \end{center}
\nocaptionrule{}
 \caption{Talks Update Results}
 \label{table:live-update-exp}
\end{table}

\section{Related Work}
\label{sec:related}

There are several threads of related work.

\paragraph*{Type Systems for Ruby.} We have developed several prior type systems
for Ruby. Diamondback Ruby (DRuby)~\cite{furr:oops09} is the first
comprehensive type system for Ruby that we are aware
of.
Because \name{} checks types at run-time, we opted to implement our
own type checker rather than reuse DRuby for type checking, which would have
required some awkward shuffling of the type table between Ruby and
OCaml. Another reason to reimplement type checking was to keep the
type system a little easier to understand---DRuby performs type
inference, which is quite complex for this type language, in contrast
to \name{}, which implements much simpler type checking.

DRuby was effective but did not handle highly dynamic language
constructs well. \pruby{}~\cite{furr:oopsla09} solves this problem using profile-based type
inference. To use \pruby{}, the developer runs the
program once to record dynamic behavior, e.g., what methods are
invoked via \ruby{send}, what strings are passed to \ruby{eval}, etc.
\pruby{} then applies DRuby to the original program text plus the
profiled strings, e.g., any string that was passed to \ruby{eval} is
parsed and analyzed like any other code. While \pruby{} can be effective,
we think that \name{}'s approach is ultimately more practical because
\name{} does not require a separate, potentially cumbersome, profiling
phase. We note that \name{} does not currently handle \ruby{eval},
because it was not used in our subject apps' code, but it could be
supported in a straightforward way.

We also developed DRails~\cite{an:ase09}, which type checks Rails apps
by applying DRuby to translated Rails code.  For example, if DRails
sees a call to \ruby{belongs\_to}, it outputs Ruby code that
explicitly contains the methods generated from the call, which DRuby
can then analyze. While DRails was applied to a range of programs, its
analysis is quite brittle. Supporting each additional Rails feature in
DRails requires implementing, in OCaml, a source-to-source
transformation that mimics that feature.  This is a huge effort and is
hard to sustain as Rails evolves. In contrast, \name{} types are
generated in Ruby, which is far easier.  DRails is also complex to
use: The program is combined into one file, then run to gather profile
information, then transformed and type checked. Using \name{} is far
simpler. Finally, DRails is Rails-specific, whereas \name{} applies
readily to other Ruby frameworks. Due to all these issues, we feel \name{}
is much more lightweight, agile, scalable, and maintainable than
DRails.

Finally, RubyDust~\cite{an:popl11} implements type inference for Ruby
at run time. RubyDust works by wrapping objects to annotate them with
type variables. More precisely, consider a method \ruby{def m(x)
  ... end}, and let $\alpha$ be the type variable for~\ruby{x}.
RubyDust's wrapping is approximately equal to adding
\ruby{x = Wrap.new(x, $\alpha$)} to the beginning of \ruby{m}.
Uses of the wrapped \ruby{x} generate type constraints on $\alpha$ and
then delegate to the underlying object. The Ruby Type
Checker~\cite{ren:oops13} (rtc) is similar but implements type
checking rather than type inference.

\name{} has several important advantages over RubyDust and rtc. First,
RubyDust and rtc can only report errors on program paths they
observe. In contrast, \name{} type checks all paths through methods it
analyzes. Second, wrapping every object with a type annotation is
extremely expensive. By doing static analysis, \name{} avoids this
overhead. 
Finally, RubyDust and rtc have no special support for
metaprogramming. In RubyDust, dynamically created methods could have
their types inferred in a standard way, though RubyDust would likely not
infer useful types for Rails-created methods. In rtc, dynamically
created methods would lack types, so their use would not be
checked. (Note that it would certainly be possible to add \name{}-style
support for metaprogramming-generated type annotations to either
RubyDust or rtc.)
In sum, we think that \name{} strikes the right compromise between
the purely static DRuby approach and the purely dynamic RubyDust/rtc
approach.

\paragraph*{Type Systems for Other Dynamic Languages.}
Many researchers have proposed type systems for dynamic languages,
including Python~\cite{aycock:python},
JavaScript~\cite{anderson:javascript,thiemann:javascript,Lerner:2013:TRT:2508168.2508170},
Racket~\cite{Tobin-Hochstadt:2008:DIT:1328438.1328486,St-Amour:2012:TNT:2187125.2187146,Tobin-Hochstadt:2010:LTU:1863543.1863561}, and
Lua~\cite{Maidl:2014:TLO:2617548.2617553}, or developed new dynamic languages or
dialects with special type systems, such as 
Thorn~\cite{bloom-vitek:thorn},
TypeScript~\cite{bierman2014understanding,Rastogi:2015:SEG:2676726.2676971},
and Dart~\cite{dartlang}. To our knowledge, these type systems are focused
on checking the core language and can have difficulty in the face of
metaprogramming.

One exception is RPython~\cite{ancona:rpython}, which
introduces a notion of load time, during which highly dynamic features
may be used, and run time, when they may not be. In contrast, \name{}
does not need such a separation.

Lerner et al~\cite{Lerner:2013:CFF:2524984.2524990} propose a system for type checking
programs that use JQuery, a very sophisticated Javascript
framework. The proposed type system has special support for JQuery's
abstractions, making it quite effective in that domain. On the other
hand, it does not easily apply to other frameworks.

Feldthaus et al's TSCHECK~\cite{Feldthaus:2014:CCT:2660193.2660215} is
a tool to check the correctness of TypeScript interfaces for JavaScript 
libraries. TSCHECK discovers a library's API by taking a snapshot after 
executing the library's top-level code. It then performs checking using 
a separate static analysis. This is similar to \name{}'s tracking 
of type information at run-time and then performing static checking 
based on it. However, \name{} allows type information to be 
generated at any time and not just in top-level code.

\paragraph*{Related Uses of Caching.} Several researchers have
proposed systems that use caching in a way related to \name{}.
Koukoutos et al~\cite{DBLP:conf/rv/KoukoutosK14} reduce the overhead
of checking data structure contracts (e.g., ``this is a binary search
tree'') at run time by modifying nodes to hold key verification
properties. This essentially caches those properties. However, because
the properties are complex, the process of caching them is not
automated.

Stulova et al~\cite{DBLP:journals/corr/StulovaMH15} propose memoizing
run-time assertion checking to improve performance.  This is similar
to \name{}'s type check caching, but much more sophisticated because
the cached assertions arise from a rich logic.

Hermenegildo et al~\cite{Hermenegildo:2000:IAC:349214.349216} proposed
a method to incrementally update analysis results at run-time as code
is added, deleted, or changed.
Their analysis algorithms are designed for constraint logic
programming languages, and are
much more complicated than \name{}'s type checking.


\paragraph*{Staged Analysis.}

MetaOCaml~\cite{MetaOcaml} is a multi-stage extension of OCaml in which code is compiled
in one stage and executed in a later stage. The MetaOCaml compiler 
performs static type checking on any such delayed code, which is similar 
to \name{}'s just-in-time type checking. A key difference between MetaOCaml 
and \name{} is that Ruby programs do not have clearly delineated stages.

Chugh et al's staged program analysis~\cite{Chugh:2009:SIF:1542476.1542483}
performs static analysis on as much 
code as is possible at compile time, and then computes a set of remaining 
checks to be performed at run time. \name{} uses 
a related idea in which no static analysis is performed at compile time, 
but type checking is always done when methods are called. \name{} is 
simpler because it need not compute which checks are necessary, as it always 
does the same kind of checking.

\paragraph*{Other.}

Several researchers have explored other ways to bring the benefits of
static typing to dynamic languages. Contracts~\cite{ff:ho-contracts}
check assertions at function or method entry and exit. In contrast,
\name{} performs static analysis of method bodies, which can find bugs
on paths before they are run. At the same time, contracts can encode
richer properties than types.

Gradual typing~\cite{siek06:gradual} lets developers add types
gradually as programs evolve; Vitousek et al recently implemented
gradual typing for
Python~\cite{Vitousek:2014:DEG:2661088.2661101}. Like
types~\cite{Wrigstad:2010:ITU:1706299.1706343} bring some of the
flexibility of dynamic typing to statically typed languages.
The goal of these systems is to allow mixing of typed and untyped
code. This is orthogonal to \name{}, which focuses on checking code
with type annotations.

Richards et al~\cite{Richards:2010:ADB:1806596.1806598,Richards:2011:EML:2032497.2032503} have explored how highly dynamic language features are
used in
JavaScript. They
find such features, including \ruby{eval}, are used extensively in a
wide variety of ways, including supporting metaprogramming.

The GHC Haskell compiler lets developers defer type errors until run-time 
to suppress type errors on code that is never actually executed \cite{ghc}. 
\name{} provides related behavior in that a method that is never called 
will never be type checked by \name{}.  Template Haskell 
\cite{Sheard:2002:TMH:636517.636528} can be used 
for compile-time metaprogramming. Since Haskell programs contain types, 
template Haskell is often used to generate type annotations, analogously 
to the type annotations generated using \name{}. Similarly, F\# type 
providers \cite{fsharp} 
allow users to create compile time types, properties and methods.
A key difference between these Haskell/F\# features and \name{} is that
Ruby does not have a separate compile time.

\section{Conclusion}

We presented \name{}, a novel tool that type checks Ruby apps using an
approach we call just-in-time static type checking.
\name{} works by tracking
type information \emph{dynamically}, but then checking method bodies
\emph{statically} at run time as each method is called. As long as
any metaprogramming code is extended to generate types as it creates
methods, \name{} will, in a very natural way, be able to check code
that uses the generated methods. Furthermore, \name{} can cache type
checking so it need not be unnecessarily repeated at later calls to the same method.

We formalized \name{} using a core, Ruby-like language that allows
methods and their types to be defined at arbitrary (and arbitrarily
separate) points during execution, and we proved type soundness.
We implemented \name{} on top of RIL, for parsing Ruby source code,
and RDL, for intercepting method calls and storing type
information. We applied \name{} to six Ruby apps, some of which use
Rails. We found that \name{}'s approach is effective, allowing it to
successfully type check all the apps even in the presence of
metaprogramming. We ran \name{} on earlier versions of one app
and found several type errors. Furthermore, we ran \name{} while
applying a sequence of updates to a Rails app in development mode to
demonstrate cache invalidation under \name{}.
Finally, we measured \name{}'s run-time overhead and found it is reasonable.

In sum, we think that \name{} takes a strong step forward in bringing
static typing to dynamic languages.

\section*{Acknowledgments}

Thanks to ThanhVu Nguyen and the anonymous reviewers for their helpful
comments. This research was supported in part by NSF CCF-1319666 and
Subcontract to Northeastern University, NSF CCF-1518844.


\bibliographystyle{abbrvnat}
\bibliography{tr}

\appendix

\section{Complete Formalism}
\label{sec:appendix}

This section contains the full definitions and proofs for the
formalism.



We show soundness by first showing preservation and progress. As is
typical, the hardest part of the proof is preservation, which shows
that an expression's type is preserved under a step in the dynamic
semantics. To make the theorem work, we also need to reason about
preserving key properties about the typing environment, run-time
stack, and cache. Here is the statement of the theorem, which
we explain in detail next:

\begin{theorem}[Preservation]
  If
  \begin{enumerate}[label=(\arabic*)]
    \item $\config{\cache, \TT, \DT, \DE, e, \DS} \rightarrow
      \config{\cache', \TT', \DT', \DE', e', \DS'}$
    \item $\TT \vdash \config{\TE, e} \Rightarrow \config{\TE', \tau}$
    \item $\tau \leq \TS$
    \item $\TE \sim \DE$
    \item $\TT \vdash \TS \sim \DS$
    \item $\cache \sim (\TT, \DT)$
  \end{enumerate}
  Then there exist $\TEalt, \TEalt', \TS', \tau'$ such that
  \begin{enumerate}[label=(\alph*)]
  \item $\TT' \vdash \config{\TEalt, e'} \Rightarrow \config{\TEalt', \tau'}$
  \item $\tau' \leq \TS'$
  \item If $\DS = \DS'$ then $\TEalt' \leq \TE'$
  \item $\TEalt \sim \DE'$
  \item $\TT' \vdash \TS' \sim \DS'$
  \item $\cache' \sim (\TT', \DT')$
  \end{enumerate}
  
\end{theorem}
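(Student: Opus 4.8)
The plan is to prove preservation by induction on the derivation of the reduction step in hypothesis~(1), with a case analysis on the last rule applied. All rules except \rname{EContext} are non-recursive, so the only genuinely inductive case is \rname{EContext}; the remaining cases are discharged by inverting the typing derivation in hypothesis~(2) and re-establishing each of conclusions (a)--(f). Before starting I would isolate two reusable lemmas. The first is an \emph{environment monotonicity} lemma: if $\TT \vdash \config{\TE, e} \Rightarrow \config{\TE', \tau}$ and $\TEalt \leq \TE$, then $\TT \vdash \config{\TEalt, e} \Rightarrow \config{\TEalt', \tau'}$ for some $\TEalt' \leq \TE'$ and $\tau' \leq \tau$, proved by an easy induction on the typing derivation. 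This is what makes conclusion~(c) usable, since it lets a more precise output environment propagate through a surrounding context. The second is a \emph{type-table stability} lemma: a derivation $\derivm$ that never applies \rname{TApp} at $A.m$ remains valid when $\TT$ is replaced by $\TT' = \TT[A.m \mapsto \tau_m]$; this is exactly what justifies cache upgrading.

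For the standard rules (\rname{ESelf}, \rname{EVar}, \rname{EAssn}, \rname{ENew}, \rname{ESeq}, \rname{EIfTrue}, \rname{EIfFalse}) the work is routine. In each case I invert hypothesis~(2) through the matching typing rule and read off a typing for $e'$ in an appropriately chosen $\TEalt$; conclusions (b)--(f) follow because $\cache$, $\TT$, $\DT$, and $\DS$ are unchanged and the environment consistency~(d) is easily re-established (for \rname{EAssn} this uses that the assigned value has the type recorded by \rname{TAssn}, and $\TEalt = \TE'$ suffices). For \rname{EIfTrue}/\rname{EIfFalse} the chosen branch type is a subtype of the join $\tau_1 \sqcup \tau_2$ and its output environment is below $\TE_1 \sqcup \TE_2$, which is precisely what (b) and (c) demand.

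The heart of the proof is the four middle rules and the return rule, where the cache and stack change. For \rname{EDef} the only real obligation is cache consistency~(f): since $\invcache{\cache}{A.m}$ deletes every entry keyed on $A.m$ and every derivation applying \rname{TApp} at $A.m$, the surviving entries are insensitive to the change in $\DT$, so $\cache' \sim (\TT, \DT')$. \rname{EType} is the subtle case: here $\TT$ becomes $\TT'$, which could in principle invalidate cached derivations mentioning $A.m$'s old type; but invalidation has already removed exactly those derivations, so by the type-table stability lemma the upgraded cache $\upgcache{(\invcache{\cache}{A.m})}{\TT'}$ is consistent with $(\TT', \DT)$. For \rname{EAppMiss} I use the rule's own side conditions, which already supply a valid body derivation $\derivm$ and subtyping $\derivt$; I take $\TEalt = [x \mapsto \tau_1, \sself \mapsto A]$, establish $\TEalt \sim \DE'$ from $\stype{v_2} \leq \tau_1$ and $v_1 = [A]$, set the new expected type $\TS' = \tau_2$, and push one frame onto $\TT \vdash \TS \sim \DS$ using the facts that the call-site context $C$ expects a value of the caller's expected type $\TS$ and that the caller environment $\TE$ matches $\DE$. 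Cache consistency holds because the freshly added entry is the derivation just checked. \rname{EAppHit} is identical except that $\derivm$ and $\derivt$ are extracted from hypothesis~(6) rather than rechecked. \rname{ERet} is dual: I invert stack consistency at the top frame to recover a typing of $C$ and the caller environment, then plug the returned value, whose type is below the frame's expected type, back into the context.

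Finally, \rname{EContext} is the inductive case: the inner expression steps, so I apply the induction hypothesis to obtain a typing of $e'$ together with the environment-subtyping conclusion~(c), and then re-type $C[e']$ by replaying the derivation of $C[\cdot]$ with $e'$ in the hole, invoking the environment monotonicity lemma to show the whole context still type checks and that its output environment stays below the original. The main obstacle throughout is maintaining the two ``global'' invariants simultaneously---cache consistency under the table-changing rules \rname{EDef} and \rname{EType}, and stack/expected-type consistency under the push/pop rules \rname{EAppMiss}, \rname{EAppHit}, and \rname{ERet}---while threading the flow-sensitive environment subtyping~(c) through \rname{EContext}. I expect the cache-upgrade step in \rname{EType} to be the single trickiest point, since it is where the two novel components, the cache and the mutable type table, interact.
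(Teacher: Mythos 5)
Your proposal is correct and follows essentially the same route as the paper's proof: induction on the reduction step with the identical case split, the same justification for cache upgrading in \rname{EType} via the side condition that no cached derivation applies \rname{TApp} at $A.m$, and the same push/pop discipline relating type-stack frames to context typings in \rname{EAppMiss}, \rname{EAppHit}, and \rname{ERet}. The only difference is cosmetic---your environment-monotonicity lemma plays the role that the paper leaves implicit when it calls the \rname{EContext} case ``routine,'' while the paper instead makes explicit the Contextual Substitution and Substitution lemmas that you use informally when pushing and popping stack frames.
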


Let's step through the assumptions and conclusions of the theorem.
(1) and (2) are standard---they assume that $e$ takes a step and is
well-typed, respectively. The corresponding conclusion (a) states that
$e'$ is also well-typed.

(4) assumes the type and dynamic environments are consistent---meaning
values in $\DE$ have the corresponding types in $\TE$---and conclusion (d)
states that they are still consistent after reduction. Formally:

\begin{definition}[Environment consistency]
  Type environment $\TE$ is \emph{consistent} with dynamic environment
  $\DE$, written $\TE \sim \DE$, if $\dom(\TE) \subseteq \dom(\DE)$
  and for all $x\in\dom(\TE)$ there exists $\tau$ such that
  $\cdot \vdash \config{\TE, \DE(x)} \Rightarrow \config{\TE, \tau}$
  and $\tau \leq \TE(x)$.
\end{definition}

Notice this definition allows $\DE$ to include some variables that are
not bound in $\TE$. This is necessary to handle \rname{TIf}, which
discards any variables from the type environment that are bound in one
arm of the conditional but not the other.

Next, (3) and (5) concern the type of $e$ and the
stack. The goal of preservation is to show $e$'s type is preserved,
but consider \rname{EApp*} and \rname{ERet}. These rules both push and
pop the stack and change the expression being evaluated---hence $e'$
could potentially have an entirely different type than $e$.

Our solution is to introduce the notion of a \emph{type stack} $\TS$
to mirror the run-time stack. To understand how the type stack works,
suppose we want to apply preservation to $C[v_1.m(v_2)]$, i.e., we are
about to call a method. The typing judgment is
$\TT \vdash \config{\TE, C[v_1.m(v_2)} \Rightarrow \config{\TE',
  \tau'}$.
In the dynamic semantics, the \rname{EApp*} rules will push the
current environment $\DE$ and the context $C$ on the
stack. Correspondingly, we will push the current typing judgment onto
the type stack---at least the key pieces of it. More specifically, we
push an element of the form $\tselt{\TE}{\tau}{\TE'}{\tau'}$, where
$\TE$ and $\TE'$ are the initial and final environments of the current
typing judgment; $C$ is the context; and $\tau$ is the type of
expression $v_1.m(v_2)$, i.e., the type that the method must return.

Given this mechanism, the key invariant to maintain is that the type
of the expression is compatible with what the calling functions
expects. We define:

\begin{definition}[Stack subtyping]
  $\tau_0 \leq \tselt{\TE}{\tau}{\TE'}{\tau'}::\TS$ if $\tau_0 \leq \tau$.
\end{definition}

Then (3) assumes that the type of $e$ is a subtype of the type
expected by the calling function. (At the top-level, we initialize the
type stack with a frame that expects whatever the top-level type is.)
(b) states that the type of $e'$ is also a subtype of the type
expected by its calling function. Thus, if the stack does not change,
this means that $e'$ and $e$ have the same type (up to subtyping). If
the stack does change, then we still maintain the invariant.

Of course, we need this invariant to hold no matter how many pushes
and pops happen. Thus, rather than only talk about the top element of
the type stack, we need to ensure that all elements of the type stack
are consistent with all elements of the dynamic stack. Formally:

\begin{definition}[Stack consistency]
  Type stack element \\ $\tselt{\TE}{\tau}{\TE'}{\tau'}$ is \emph{consistent} with dynamic
  stack element $(E, C)$, written
  $\TT \vdash \tselt{\TE}{\tau}{\TE'}{\tau'} \sim (E, C)$, if $\TE \sim E$ and
  $\TT \vdash \config{\TE[\Box\mapsto\tau], C} \Rightarrow \config{\TE', \tau'}$.
  (Here we abuse notation and treat $\Box$ as if it's a variable.)
  
  Type stack $\TS$ is \emph{consistent} with dynamic stack $\DS$,
  written $\TT \vdash \TS \sim \DS$, is defined inductively as
  \begin{enumerate}
    \item $\TT \vdash \cdot \sim \cdot$
    \item $\TT \vdash \tselt{\TE}{\tau}{\TE'}{\tau'}::\TS \sim
      (E,C)::\DS$ if
      \begin{enumerate}
      \item $\tselt{\TE}{\tau}{\TE'}{\tau'} \sim (E,C)$
      \item $\TS \sim \DS$
      \item $\tau' \leq \TS$ if $\TS \neq \cdot$
      \end{enumerate}
    \end{enumerate}
\end{definition}

Thus, (5) assumes the type and dynamic stacks are consistent, and (e)
concludes they remain consistent after taking a step.

Next, (c) relates the output environment of $e'$ with the output
environment of $e$.There are two cases. If the stack did not change
(the antecedent of the conclusion is true), then the output
environment of $e'$ should be compatible with $\TE'$. Again because of
\rname{TIf}, we need to allow the environment to shrink:

\begin{definition}[Type environment subsumption]
  We write $\TE_1 \leq \TE_2$ if $\dom(\TE_2) \subseteq \dom(\TE_1)$ and
  for all $x\in\dom(\TE_2)$, it is the case that $\TE_1(x) \leq \TE_2(x)$.
\end{definition}

If the stack does change, then the output environment is irrelevant:
It either is captured in the type stack if this is a push due to a
method call. Or it is discarded as the stack frame is popped when a
method returns. Hence in this case the antecedent of (c) is false, and
the conclusion is trivial.

Finally, we need to reason about the cache. As we saw earlier, the key
cache invariant to preserve is that all the derivations stored in the
cache hold and apply to the premethod stored in $\DT$ and the type
stored in $\TT$. Formally:

\begin{definition}[Cache consistency]
  We say that cache $\cache$ is \emph{consistent} with type class
  table $\TT$ and dynamic class table $\DT$, written
  $\cache \sim (\TT, \DT)$, if for all $A.m\in\dom(\cache)$ where
  $\cache(A.m) = (\derivm, \derivt)$, 
with $\derivm = (\TT \vdash \config{[x\mapsto\tau_1, \sself\mapsto A],
  e} \Rightarrow \config{\TE', \tau})$ and $\derivt = (\tau \leq \tau_2)$,
it is the case that
  $\derivm$ and $\derivt$ hold and
  $\DT(A.m) = \lambda x.e$ and
  $\TT(A.m) = \tau_1 \rightarrow \tau_2$.
\end{definition}

Thus, (6) assumes the cache is consistent, and (f) concludes the new
cache is also consistent.

To show preservation, we also need a few lemmas:

\begin{lemma}
  \label{lemma:type-env-leq-union}
  For all $\TE_1$ and $\TE_2$, it is the case that
  $\TE_1 \leq (\TE_1 \sqcup \TE_2)$.
\end{lemma}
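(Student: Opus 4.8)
The plan is to unfold the definition of type environment subsumption and reduce the claim to a pointwise statement about value types. By the definition of $\leq$ on type environments, proving $\TE_1 \leq (\TE_1 \sqcup \TE_2)$ requires two things: that $\dom(\TE_1 \sqcup \TE_2) \subseteq \dom(\TE_1)$, and that $\TE_1(x) \leq (\TE_1 \sqcup \TE_2)(x)$ for every $x \in \dom(\TE_1 \sqcup \TE_2)$.

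First I would dispatch the domain inclusion. By the definition of the environment least upper bound, $(\TE_1 \sqcup \TE_2)(x)$ is defined exactly when $x \in \dom(\TE_1) \wedge x \in \dom(\TE_2)$, so $\dom(\TE_1 \sqcup \TE_2) \subseteq \dom(\TE_1)$ is immediate. Next, for the pointwise subtyping obligation, I would fix $x \in \dom(\TE_1 \sqcup \TE_2)$, so that $(\TE_1 \sqcup \TE_2)(x) = \TE_1(x) \sqcup \TE_2(x)$, and reduce the goal to the value-type fact that $\tau_1 \leq \tau_1 \sqcup \tau_2$ whenever $\tau_1 \sqcup \tau_2$ is defined.

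This auxiliary fact follows by a short case analysis on the definition of $\sqcup$ for value types. If $\tau_1 = \snil$, then $\tau_1 \sqcup \tau_2 = \tau_2$, and $\snil \leq \tau_2$ holds for every value type ($\snil \leq A$ by definition when $\tau_2 = A$, and $\snil \leq \snil$ by reflexivity when $\tau_2 = \snil$). If $\tau_1 = A$, then for the lub to be defined $\tau_2$ must be either $\snil$ or $A$ itself, and in both cases $\tau_1 \sqcup \tau_2 = A$, so $A \leq A$ gives the result.

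The only subtlety to make explicit is the partiality of $\sqcup$ on value types, which is defined solely by $A \sqcup A = A$ and $\snil \sqcup \tau = \tau \sqcup \snil = \tau$, and hence is undefined on two distinct class names. I would observe that in that situation $x$ simply is not in $\dom(\TE_1 \sqcup \TE_2)$, so the pointwise obligation only ranges over the cases the case analysis covers. There is no genuine obstacle here—the lemma is essentially a definitional unfolding—so the real work is just performing the case split on the value-type lub and stating the partiality convention and reflexivity of $\leq$ precisely.
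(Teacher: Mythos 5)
Your proof is correct: the paper states this lemma without proof, treating it as immediate from the definitions, and your argument is exactly the definitional unfolding one would write down---domain inclusion from the fact that $\sqcup$ on environments is only defined on the intersection of domains, plus the pointwise case split on $\snil$ versus a class type. Your explicit handling of the partiality of $\sqcup$ (and the implicit reflexivity $\snil \leq \snil$) is careful and matches the paper's intent.
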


\begin{lemma}[Contextual Substitution]
  If
  \begin{displaymath}
    \infer
    {
      \TT \vdash \config{\TE, e} \Rightarrow \config{\TE,
        \tau'} \\\\
      \vdots
    }
    {\TT \vdash \config{\Gamma_C, C[e]} \Rightarrow \config{\Gamma'_C, \tau_C}}
  \end{displaymath}
  then $\TT \vdash \config{\Gamma_C[\Box\mapsto\tau'], C} \Rightarrow
  \config{\Gamma'_C, \tau_C}$.
\end{lemma}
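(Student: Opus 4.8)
The plan is to prove the lemma by structural induction on the context $C$. The type checking system of Figure~\ref{fig:types} is syntax-directed, so the outermost shape of $C[e]$ pins down the unique rule at the root of the assumed derivation of $\TT \vdash \config{\Gamma_C, C[e]} \Rightarrow \config{\Gamma'_C, \tau_C}$. The uniform strategy is: invert that root rule, identify the single premise whose subject expression contains the hole (the subterm of the form $C'[e]$), apply the induction hypothesis to that premise to obtain a typing of the smaller context $C'$, and then reassemble the derivation by re-applying the same rule with $C'$ substituted for $C'[e]$. The remaining premises never mention the hole, so they carry over verbatim; in particular the method-lookup and subtyping side conditions of \rname{TApp} are untouched, as are the class and variable lookups.

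In the base case $C = \Box$ we have $C[e] = e$, hence $\Gamma_C = \TE$, $\Gamma'_C = \TE$, and $\tau_C = \tau'$ by the hypothesis on the hole's sub-derivation; treating $\Box$ as a variable, \rname{TVar} yields $\TT \vdash \config{\TE[\Box\mapsto\tau'], \Box} \Rightarrow \config{\TE[\Box\mapsto\tau'], \tau'}$, which is the conclusion modulo the spurious $\Box$ binding in the output discussed below. For the inductive cases I would handle $x = C'$ via \rname{TAssn}, $C'; e_2$ via \rname{TSeq}, $\sif{C'}{e_1}{e_2}$ via \rname{TIf}, and the two application forms $C'.m(e_1)$ and $v.m(C')$ via \rname{TApp}. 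Each is a mechanical inversion-then-reapplication; the only case that needs a second glance is \rname{TIf}, where the hole sits in the guard $C'$ and I must check that threading the guard's output environment into the two branches $e_1, e_2$ and then taking the least upper bound $\TE_1 \sqcup \TE_2$ reproduces the original output $\Gamma'_C$.

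The crux of the argument---and the reason the lemma is stated with the hole's sub-derivation having the \emph{same} input and output environment, $\config{\TE, e} \Rightarrow \config{\TE, \tau'}$---is that replacing $e$ by the variable $\Box$ must not perturb the flow-sensitive threading of type environments. Since \rname{TVar} has output equal to input, the substituted hole $\Box$ leaves the environment unchanged exactly as the original $e$ did; consequently every downstream premise is typed in precisely the same environment as before, and the final output environment is unchanged. Had the hole altered the environment, this correspondence would fail. The main obstacle is therefore not the induction itself but the bookkeeping around $\Box$: adding $\Box\mapsto\tau'$ to the input pollutes the environments threaded through the derivation, so the cleanest route is to prove the slightly strengthened statement $\TT \vdash \config{\Gamma_C[\Box\mapsto\tau'], C} \Rightarrow \config{\Gamma'_C[\Box\mapsto\tau'], \tau_C}$, carrying $\Box$ explicitly through the output (using $\tau' \sqcup \tau' = \tau'$ so that the binding survives the \rname{TIf} join), and then to drop the $\Box$ binding at the end since $\Box$ is fresh and is never referenced again. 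This matches the paper's convention of treating $\Box$ as an honorary variable, and it is exactly the form in which the lemma is consumed by the stack-consistency reasoning in the preservation proof.
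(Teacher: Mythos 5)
The paper states this lemma without proof (it is given as an unproved auxiliary lemma in the appendix), so there is no official argument to compare against; judged on its own merits, your proof is essentially correct and is the natural one. Structural induction on $C$ with inversion of the (syntax-directed) root rule is the right decomposition, and you correctly identify the one genuine subtlety: the hypothesis $\config{\TE,e}\Rightarrow\config{\TE,\tau'}$ with equal input and output environments is exactly what lets the \rname{TVar}-typed $\Box$ stand in for $e$ without disturbing the flow-sensitive threading. Your strengthened induction hypothesis carrying $\Box\mapsto\tau'$ through the output is the right repair for the bookkeeping mismatch in the base case. The only thing I would ask you to make explicit is the weakening step it silently relies on: in the cases $C'; e_2$, $C'.m(e_1)$, and $\sif{C'}{e_1}{e_2}$, the premises \emph{after} the hole are typed in the output environment of the hole-containing premise, which now carries the extra binding $\Box\mapsto\tau'$; so you need the lemma that adding a binding for a variable not occurring in an expression preserves its typing (with the binding persisting in the output, using $\tau'\sqcup\tau'=\tau'$ at the \rname{TIf} join, as you note). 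With that lemma stated and proved by a routine induction, your argument goes through.
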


\begin{lemma}[Substitution]
  If
  \begin{enumerate}
    \item $\TT \vdash \config{\TEalt[\Box\mapsto\tau_C], C}
      \Rightarrow \config{\TEalt', \tau'_C}$
    \item $\TT \vdash \config{\cdot, v} \Rightarrow \config{\cdot, \tau}$
    \item $\tau \leq \tau_C$
  \end{enumerate}
  Then
  \begin{math}
    \TT \vdash \config{\TEalt, C[v]} \Rightarrow \config{\TEalt', \tau''_C}
  \end{math}
  where $\tau''_C \leq \tau'_C$.
\end{lemma}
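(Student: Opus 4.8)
The plan is to prove the lemma by structural induction on the context $C$, exploiting the fact that the type system is syntax-directed, so each context former ($\Box$, $x=C$, $C.m(e)$, $v.m(C)$, $C;e$, and $\sif{C}{e_1}{e_2}$) is typed by exactly one rule. In each step I would invert the typing derivation of hypothesis~(1) on the rule matching the outermost former of $C$, apply the induction hypothesis to the sub-context (passing hypotheses (2) and (3) along unchanged), and then rebuild the derivation for $C[v]$ using the same rule, now with the hole filled by $v$ instead of the pseudo-variable $\Box$. The slack $\tau''_C \leq \tau'_C$ in the conclusion is precisely what the induction hypothesis threads through each former; the cleanest invariant additionally tracks the output environment up to the subsumption ordering of the earlier \emph{Type environment subsumption} definition, so that it coincides with $\TEalt'$ whenever no type is forced down to $\snil$.

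In the base case $C = \Box$, inversion of (1) by \rname{TVar} (treating $\Box$ as a variable, per the paper's stated convention) gives $\tau'_C = \tau_C$ and output environment $\TEalt[\Box\mapsto\tau_C]$. Since $v$ is a value, its derivation from (2) is produced by \rname{TNil} or \rname{TObject}, neither of which inspects or alters the environment; hence it lifts from $\cdot$ to $\TEalt$, and taking $\tau''_C = \tau$ hypothesis~(3) yields $\tau''_C \leq \tau_C = \tau'_C$. The only discrepancy is the now-spurious $\Box$ binding, which I would dismiss by noting that $\Box$ no longer occurs once $v$ is plugged in, so the two environments agree on all genuine program variables.

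The remaining formers split into routine and delicate cases. \rname{TSeq} and \rname{TAssn} are direct: invert, recurse on the sub-context, and reattach, observing that $\tau''_C \leq \tau'_C$ propagates, and that for \rname{TAssn} a lowered right-hand-side type yields a binding $x\mapsto\tau''_C$ that is below the original $x\mapsto\tau'_C$ in the subsumption order. The genuinely delicate inductive case is \rname{TIf}, where the guard's output environment feeds both branches: if substitution lowers that intermediate environment, I must re-establish that the two branch derivations still go through and that the final join $\TE_1 \sqcup \TE_2$ is preserved up to $\leq$, which is exactly where Lemma~\ref{lemma:type-env-leq-union} and the definition of $\sqcup$ on environments and types carry the argument.

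The step I expect to be the main obstacle is the receiver position of \rname{TApp}, i.e.\ $C = C'.m(e)$. Here the induction hypothesis only gives a receiver type $A' \leq A$, and since $\snil \leq A$ the substituted receiver can drop to $\snil$, for which \rname{TApp} has no matching instance (it demands a class receiver). This is precisely the $\snil$/\blame{} situation the paper flags: a method declared to return a class may legitimately return $\snil$, after which the continuation attempts a call on $\snil$. The honest resolution is therefore to prove the lemma modulo this case, either by weakening the conclusion to ``$C[v]$ is well-typed \emph{or} reduces to \blame{}'' or by invoking the lemma only when the returned value is non-$\snil$, while in all surviving cases carrying the subtyping relation on both the type \emph{and} the threaded environment carefully through every former so as to land on the stated $\tau''_C \leq \tau'_C$.
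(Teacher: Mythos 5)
The paper never actually proves this lemma---it is stated in the appendix and invoked in the \rname{ERet} case of Preservation, but no proof is given---so there is nothing to compare your induction against except the intended use. Your overall strategy (structural induction on $C$, inverting the unique syntax-directed rule for each context former) is the right and essentially only approach, and your central observation is a genuine one: when the hole sits in receiver position, $C = C'.m(e)$, the induction hypothesis only yields a receiver type $\tau'' \leq A$, and since $\snil \leq A$ the receiver of the rebuilt call can have type $\snil$, for which \rname{TApp} simply has no instance (it requires the receiver judgment to produce a class $A$, and there is no subsumption rule). Concretely, $C = \Box.n(e)$ with $v = \snil$ and $\tau_C = A$ satisfies hypotheses (1)--(3) but the conclusion is underivable. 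So the lemma as stated is false, and this is a defect of the paper rather than of your attempt; your suggested repairs (weaken the conclusion to ``well-typed or steps to $\blame$,'' or restrict to non-$\snil$ returns) are the standard ways out, though either one forces a corresponding reformulation of the \rname{ERet} case of Preservation, which currently consumes the lemma in exactly its stated form.

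Where your proposal goes wrong is in underestimating how far this problem spreads. You call \rname{TSeq} ``direct'' and claim \rname{TIf} is carried by Lemma~\ref{lemma:type-env-leq-union} and the definition of $\sqcup$, but both cases have the hole in a position whose \emph{output environment} feeds a subsequent subexpression typed from scratch: in $C';e_2$ the lowered output environment of $C'[v]$ becomes the input to $e_2$, and in $\sif{C'}{e_1}{e_2}$ it becomes the input to both branches (e.g.\ $C = \sif{(x=\Box)}{x.m(\ldots)}{\ldots}$ rebinds $x$ to $\tau$ rather than $\tau_C$). Discharging these cases requires an auxiliary monotonicity lemma of the form ``if $\TEalt_1 \leq \TEalt_2$ and $\TT \vdash \config{\TEalt_2, e} \Rightarrow \config{\TE', \tau}$ then $e$ is typable under $\TEalt_1$ with a smaller result,'' which the paper does not state and which fails for exactly the same $\snil$-receiver reason; Lemma~\ref{lemma:type-env-leq-union} only says $\TE_1 \leq \TE_1 \sqcup \TE_2$ and does nothing to re-establish the branch derivations. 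You were right to weaken the output environment to $\TEalt'' \leq \TEalt'$ (the stated conclusion's exact $\TEalt'$ is already unachievable because of \rname{TAssn}), but a complete proof must formulate and prove that weakening lemma---with whatever $\snil$/$\blame$ escape hatch you adopt---before the sequencing, conditional, and receiver cases all close.
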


Finally, we can prove preservation:

\begin{proof} \textbf{(Preservation)}
  By induction on $\config{\cache, \TT, \DT, \DE, e, \DS} \rightarrow$ \\
  $\config{\cache', \TT', \DT', \DE', e', \DS'}$.

  \begin{itemize}

  \item Case \rname{EContext}. Notice that we cannot have
    $\DS' \neq \DS$, since the only cases where that can happen is if
    \rname{EApp} or \rname{ERet} apply, and they cannot be used as a
    hypothesis of \rname{EContext}. Thus the left-hand side of the
    implication (c) is true, and we have $\TEalt' \leq \TE'$.  Using
    this fact, the remainder of the proof is routine.

    \item Case \rname{ESelf}. By assumption we have
      \begin{enumerate}[label=(\arabic*)]
      \item $\config{\cache, \TT, \DT, \DE, \sself, \DS} \rightarrow
        \config{\cache, \TT, \DT, \DE, \DE(\sself), \DS}$ by \rname{ESelf}
      \item $\TT \vdash \config{\TE, \sself} \Rightarrow \config{\TE,
          \TE(\sself)}$ by \rname{TSelf}
      \item $\TE(\sself) \leq \TS$
      \item $\TE \sim \DE$
      \item $\TT \vdash \TS \sim \DS$
      \item $\cache \sim (\TT, \DT)$
      \end{enumerate}
      Let $\TEalt = \TEalt' = \TE$, and let $\TS' = \TS$.  By (2) and
      (4) there exists $\tau'$ such that
      $\cdot \vdash \config{\TEalt, \DE(\sself)} \Rightarrow
      \config{\TEalt, \tau'}$
      and $\tau' \leq \TEalt(\sself)$.  Then (a) holds, since typing
      of $\DE(\sself)$ was by \rname{TNil} or \rname{TObject}, which
      do not depend of the type class table.  Also, (b) holds since
      $\tau' \leq \TEalt(\sself) = \TE(\sself) \leq TS$ by (3).
      Also, the right-hand side of the implication (c) holds trivially.
      Finally, (d) holds by (4), (e) holds by (5), and (f) holds by (6).

      \item Case \rname{EVar}. Similar to \rname{ESelf} case.

      \item Case \rname{EAssn}. By assumption we have
      \begin{enumerate}[label=(\arabic*)]
      \item $\config{\cache, \TT, \DT, \DE, x=v, \DS} \rightarrow \config{\cache, \TT, \DT, \DE[x \mapsto v], v, \DS}$
      \item
        \begin{displaymath}
          \infer
          { \TT \vdash \config{\TE, v} \Rightarrow \config{\TE, \tau} }
          {\TT \vdash \config{\TE, x = v} \Rightarrow \config{\TE[x\mapsto \tau], \tau}}
        \end{displaymath}
        by \rname{TAssn} and either \rname{TNil} or \rname{TObject}
      \item $\tau \leq \TS$
      \item $\TE \sim \DE$
      \item $\TT \vdash \TS \sim \DS$
      \item $\cache \sim (\TT, \DT)$
      \end{enumerate}
      Let $\TEalt = \TEalt' = \TE[x\mapsto\tau]$, let $\TS' = \TS$,
      and let $\tau' = \tau$. Notice that in (2),
      the hypothesis can only be proven by either \rname{TNil} or
      \rname{TObject}, both of which are insensitive to the type
      environment. Thus, by the hypothesis of (2), we also have
      $\TT \vdash \config{\TEalt, v} \Rightarrow \config{\TEalt,
        \tau}$, which is (a). Also, (b), (e), and (f) hold trivially by (3), (5), and (6).
      Also, the right-hand side of the implication (c) holds trivially.
      Finally, from (4)
      and the hypothesis of (2) we have $\TEalt = \TE[x\mapsto\tau]
      \sim E[x\mapsto v]$, which is (d).

    \item Case \rname{ENew}. Trivial.

    \item Case \rname{ESeq}. Trivial.


      \item Case \rname{EIfTrue}. By assumption we have
      \begin{enumerate}[label=(\arabic*)]
      \item $\config{\cache, \TT, \DT, \DE, \sif{v}{e_1}{e_2}, \DS} \rightarrow \\\config{\cache, \TT, \DT, \DE, e_1, \DS}$ where $v \neq \snil$
      \item
        \begin{displaymath}
          \infer{\TT \vdash \config{\TE, v} \Rightarrow \config{\TE, \tau} \\\\
            \TT \vdash \config{\TE, e_1} \Rightarrow \config{\TE_1, \tau_1} \\\\
            \TT \vdash \config{\TE, e_2} \Rightarrow \config{\TE_2, \tau_2}}
          {\TT \vdash \config{\TE, \sif{v}{e_1}{e_2}} \Rightarrow \config{\TE_1 \sqcup \TE_2, \tau_1 \sqcup \tau_2}}
        \end{displaymath}
        by \rname{TIf} and \rname{TObject}, since $v\neq\snil$
      \item $\tau_1\sqcup\tau_2 \leq \TS$
      \item $\TE \sim \DE$
      \item $\TT \vdash \TS \sim \DS$
      \item $\cache \sim (\TT, \DT)$
      \end{enumerate}
      Let $\TEalt = \TE$, let $\TEalt' = \TE_1$, let $\TS' = \TS$, and let
      $\tau' = \tau_1$. From the second hypothesis of (2) we trivially
      have (a). Moreover, $\tau' = \tau_1 \leq \tau_1 \sqcup \tau_2$, so by (3)
      we have $\tau' \leq \TS$, which is (b).
      By (4), (5), and (6) we trivially have (d), (e), and (f).
      Finally, by Lemma~\ref{lemma:type-env-leq-union} we have
      $\TEalt' = \TE_1 \leq (\TE_1 \sqcup \TE_2)$, which is the
      right-hand side of the implication (c).

      \item Case \rname{EIfFalse}. Similar to \rname{EIfTrue} case.

      \item Case \rname{EDef}. By assumption we have
        \begin{enumerate}[label=(\arabic*)]
        \item
          $\config{\cache, \TT, \DT, \DE, \sdef{A.m}{\sprem{x}{e}},
            \DS} \rightarrow \\\config{\invcache{\cache}{A.m}, \TT,
            \DT[A.m \mapsto \lambda x.e], \DE, \snil, \DS}$ by \rname{EDef}
        \item $\TT \vdash \config{\TE, \sdef{A.m}{\sprem{x}{e}}} \Rightarrow \config{\TE, \snil}$ by \rname{TDef}
        \item $\snil \leq \TS$
        \item $\TE \sim \DE$
        \item $\TT \vdash \TS \sim \DS$
        \item $\cache \sim (\TT, \DT)$
        \end{enumerate}
        Let $\TEalt = \TEalt' = \TE$, let $\TS' = \TS$, and let
        $\tau' = \snil$.  Then (a) holds trivially by
        \rname{TNil}. (c) holds trivially by definition.  (b), (d), and (e) hold trivially
        by (3), (4), and (5).

        For (f), pick some $B.m' \in
        \dom(\invcache{\cache}{A.m})$.
        Observe that $B.m' \neq A.m$, by construction.  By (6), we
        have
        \begin{enumerate}
        \item $\derivm = (\TT \vdash \config{[y\mapsto\tau_y, \sself\mapsto B], e} \Rightarrow \config{\TE'_y, \tau'_y})$
        \item $\derivt = (\tau'_y \leq \tau_2)$
        \item $\derivm$ and $\derivt$ hold
        \item $\DT(B.m') = \lambda y.e'$
        \item $\TT(B.m') = \tau_y \rightarrow \tau_2$
        \end{enumerate}
        We need to show the above with the same type class table and
        with dynamic class table $\DT[A.m\mapsto \lambda x.e]$. But
        then 1, 2, 3, and 5 are trivial, and since $B.m' \neq A.m$ we
        have $(\DT[A.m\mapsto \lambda x.e])(B.m') = \DT(B.m')$, thus 4
        is trivial.

      \item Case \rname{EType}. This case is very similar to
        \rname{EDef}, except the reduction in the semantics is
        different. By assumption, we have
        \begin{enumerate}[label=(\arabic*)]
        \item
          $\config{\cache, \TT, \DT, \DE, \smethsig{A.m}{\tau_m}, \DS} \rightarrow$\\
          $\config{\upgcache{(\invcache{\cache}{A.m})}{\TT'}, \TT', \DT, \DE, \snil, \DS}$
          where $\TT' = \TT[A.m\mapsto \tau_m]$, by \rname{EType}
        \item $\TT \vdash \config{\TE, \sdef{A.m}{\sprem{x}{e}}} \Rightarrow \config{\TE, \snil}$ by \rname{TDef}
        \item $\snil \leq \TS$
        \item $\TE \sim \DE$
        \item $\TT \vdash \TS \sim \DS$
        \item $\cache \sim (\TT, \DT)$
        \end{enumerate}
        (a)--(d) hold by the same reasoning above.
        To see (e), observe that side condition
        $A.m\not\in\textrm{TApp}(S)$ means that in the typing
        judgments internal to (5), (TApp) is never applied with
        $A.m$. Hence those same judgments hold under $\TT'$, which
        only differs from $\TT$ in its binding or $A.m$.

        Let
        $\cache' = \upgcache{(\invcache{\cache}{A.m})}{\TT'}$.  For
        (f), again pick some $B.m' \in \dom(\cache')$.  Observe that
        $B.m' \neq A.m$, by construction.  By (6), we have
        \begin{enumerate}
        \item $\derivm = (\TT \vdash \config{[y\mapsto\tau_y, \sself\mapsto B], e} \Rightarrow \config{\TE'_y, \tau'_y})$
        \item $\derivt = (\tau'_y \leq \tau_2)$
        \item $\derivm$ and $\derivt$ hold
        \item $\DT(B.m') = \lambda y.e'$
        \item $\TT(B.m') = \tau_y \rightarrow \tau_2$
        \end{enumerate}
        We need to show the above, but in $\cache'$ and with type
        class table $\TT'$ and the same dynamic class table. By
        construction, $\cache'(B.m') = (\derivm', \derivt)$ where
        $\derivm' = (\TT' \vdash \config{[y\mapsto\tau_y,
          \sself\mapsto B], e} \Rightarrow \config{\TE'_y, \tau'_y})$,
        which is 1 and 2.  Notice by construction that $\derivm$
        and $\derivm'$ cannot refer to $A.m$, thus we have 3. Finally,
        4 holds trivially, and 5 holds since $B.m' \neq A.m$ by
        construction.

      \item Case \rname{EAppMiss}. The inductive cases are similar to \rname{EContext}.
        In the non-inductive case, by assumption we have
      \begin{enumerate}[label=(\arabic*)]
      \item $\config{\cache, \TT, \DT, \DE, C[[A].m(v_2)], \DS} \rightarrow$ \\
        $\config{\cache[A.m\mapsto(\derivm,\derivt)], \TT, \DT, [\sself \mapsto [A], x \mapsto v_2], e, (\DE, C)::\DS}$ where
        \begin{enumerate}[label=(1\alph*)]
        \item $\DT(A.m) = \lambda x.e$
        \item $\TT(A.m) = \tau_1 \rightarrow \tau_2$
        \item $\stype{v_2} \leq \tau_1$
        \item $\derivm = (\TT \vdash \config{[x \mapsto \tau_1, \sself \mapsto A], e} \Rightarrow \config{\TE', \tau'_2})$ holds
        \item $\derivt = (\tau'_2 \leq \tau_2)$ holds
        \item $A.m\not\in\dom(\cache)$
        \end{enumerate}
      \item
        \begin{displaymath}
          \infer
          {
            \infer
            {\TT \vdash \config{\TE, [A]} \Rightarrow \config{\TE, A} \\\\
              \TT \vdash \config{\TE, v_2} \Rightarrow \config{\TE, \tau} \\\\
              \TT(A.m) = \tau_1\rightarrow \tau_2 \\
              \tau \leq \tau_1}
            {\TT \vdash \config{\TE, [A].m(v_2)} \Rightarrow \config{\TE, \tau_2}} \\\\
            \vdots
          }
          {\TT \vdash \config{\TE_C, C[[A].m(v_2)]} \Rightarrow \config{\TE'_C, \tau_C}}
        \end{displaymath}
        by \rname{TApp} and \rname{TObject} and possible \rname{TNil}.
      \item $\tau_C \leq \TS$
      \item $\TE_C \sim \DE_C$
      \item $\TT \vdash \TS \sim \DS$
      \item $\cache \sim (\TT, \DT)$
      \end{enumerate}
      Let
      $\TEalt = [x \mapsto \tau_1, \sself \mapsto A]$, let
      $\TEalt' = \TE'$, let
      $\TS' = \tselt{\TE_C}{\tau_2}{\TE'_C}{\tau_C}::\TS$, and let
      $\tau' = \tau'_2$. 
      Then (a) holds immediately by (1d). (b) holds by (1e) and
      construction of $\TS'$. In this case the stack changes, so the
      left-hand side of the implication (c) is false, hence (c) holds
      trivially.
      (d) holds because by \rname{TObject} we have
      $[A]$ has type $A$, and by the second hypothesis of (2), which
      is either \rname{TObject} or \rname{TNil}, we have $v_2$ has
      type $\tau$, and by the last hypothesis of (2) we have
      $\tau \leq \tau_1$.

      Next we show (e).  By (4) we have $\Gamma_C \sim E$, and by (2)
      and the Contextual Substitution Lemma we have
      $\TT \vdash \config{\TE_C[\Box\mapsto\tau_2], C} \Rightarrow
      \config{\TE'_C, \tau_C}$.
      Thus we have $\TT \vdash \tselt{\TE_C}{\tau_2}{\TE'_C}{\tau_C} \sim (E,C)$.
      Further, by (3) we have $\tau_C \leq \TS$. Finally, by (5) we
      have $\TT \vdash \TS \sim \DS$.  Putting this all together, we
      have
      $\TT \vdash \tselt{\TE_C}{\tau_2}{\TE'_C}{\tau_C} \sim (E,C) ::
      \TS \sim (E,C) :: \DS$, which is (e).

      Finally, to show (f), pick some element in the domain of
      $\cache' = \cache[A.m\mapsto(\derivm,\derivt)]$. If we pick some
      $B.m'\neq A.m$ then all the necessary properties hold by (6). If
      we pick $A.m$, then 1 and 2 hold by construction, 3 holds by
      (1d) and (1e), 4 holds by (1a), and 5 holds by (1b).

    \item Case \rname{EAppHit}. This case follows mostly the same
      reasoning as above. The inductive cases are similar to \rname{EContext}.
        In the non-inductive case, by assumption we have
      \begin{enumerate}[label=(\arabic*)]
      \item $\config{\cache, \TT, \DT, \DE, C[[A].m(v_2)], \DS} \rightarrow$ \\
        $\config{\cache, \TT, \DT, [\sself \mapsto [A], x \mapsto v_2], e, (\DE, C)::\DS}$ where
        \begin{enumerate}[label=(1\alph*)]
        \item $\DT(A.m) = \lambda x.e$
        \item $\TT(A.m) = \tau_1 \rightarrow \tau_2$
        \item $\stype{v_2} \leq \tau_1$
        \item $A.m\in\dom(\cache)$
        \end{enumerate}
      \item
        \begin{displaymath}
          \infer
          {
            \infer
            {\TT \vdash \config{\TE, [A]} \Rightarrow \config{\TE, A} \\\\
              \TT \vdash \config{\TE, v_2} \Rightarrow \config{\TE, \tau} \\\\
              \TT(A.m) = \tau_1\rightarrow \tau_2 \\
              \tau \leq \tau_1}
            {\TT \vdash \config{\TE, [A].m(v_2)} \Rightarrow \config{\TE, \tau_2}} \\\\
            \vdots
          }
          {\TT \vdash \config{\TE_C, C[[A].m(v_2)]} \Rightarrow \config{\TE'_C, \tau_C}}
        \end{displaymath}
        by \rname{TApp} and \rname{TObject} and possible \rname{TNil}.
      \item $\tau_C \leq \TS$
      \item $\TE_C \sim \DE_C$
      \item $\TT \vdash \TS \sim \DS$
      \item $\cache \sim (\TT, \DT)$
      \end{enumerate}
      By (6), we have $\cache(A.m) = (\derivm, \derivt)$ where
      $\derivm = (\TT \vdash \config{[x \mapsto \tau_1, \sself \mapsto
        A], e} \Rightarrow \config{\TE', \tau'_2})$
      holds and $\derivt = (\tau'_2 \leq \tau_2)$ holds. Notice that
      we use properties 4 and 5 of the cache in combination with (1a)
      and (1b) to know the assigned types in the cache, and the method
      body at run-time, match in the cached derivation.

      Let
      $\TEalt = [x \mapsto \tau_1, \sself \mapsto A]$, let
      $\TEalt' = \TE'$, let
      $\TS' = \tselt{\TE_C}{\tau_2}{\TE'_C}{\tau_C}::\TS$, and let
      $\tau' = \tau'_2$. 
      Then (a) holds immediately by $\derivm$. (b) holds by $\derivt$ and
      construction of $\TS'$.

      The reasoning for (c)--(e) are the same as the \rname{EAppMiss}
      case. Finally, (f) holds trivially by (6), since the cache did
      not change.

  \item Case \rname{ERet}. We have
      \begin{enumerate}[label=(\arabic*)]
      \item $\config{\cache, \TT, \DT, \DE', v, (\DE, C)::\DS} \rightarrow \\\config{\cache, \TT, \DT, \DE, C[v], \DS}$
      \item $\TT \vdash \config{\TE', v} \Rightarrow \config{\TE', \tau}$ by either \rname{TObject} or \rname{TNil}.
      \item $\tau \leq \tau_C$
      \item $\TE' \sim \DE'$
      \item $\TT \vdash \tselt{\TE_C}{\tau_C}{\TE'_C}{\tau'_C}::\TS \sim (\DE, C)::\DS$
      \item $\cache \sim (\TT, \DT)$
      \end{enumerate}
      Let $\TEalt = \TE_C$, let $\TEalt' = \TE'_C$, and let $\TS' = \TS$.
      By (5), we have
      $\TT \vdash \config{\TEalt[\Box\mapsto\tau_C], C} \Rightarrow
      \config{\TEalt', \tau'_C}$.
      Putting that together with (2) and (3) via the substitution
      lemma, we have
      $\TT \vdash \config{\TEalt, C[v]} \Rightarrow \config{\TEalt',
        \tau''_C}$ where $\tau''_C \leq \tau'_C$. Let $\tau' =
      \tau''_C$, and we have 
      (a).  By (3) we have $\tau'_C \leq \TS$, and since $\tau''_C
      \leq \tau'_C$ we therefore have (b)
      In this case the stack changes, so the
      left-hand side of the implication (c) is false, hence (c) holds
      trivially.
      (d) holds by (5), as does (e). Finally, (f) holds by (6)
    \end{itemize}

\end{proof}

The progress theorem is much simpler:

\begin{theorem}[Progress]
  If
  \begin{enumerate}[label=(\arabic*)]
    \item $\TT \vdash \config{\TE, e} \Rightarrow \config{\TE', \tau}$
    \item $\tau \leq \TS$
    \item $\TE \sim \DE$
    \item $\TT \vdash \TS \sim \DS$
    \item $\cache \sim (\TT, \DT)$
  \end{enumerate}
  then one of the following holds
  \begin{enumerate}
    \item $e$ is a value, or
    \item 
      There exist $\cache'$, $\TT'$, $\DT'$, $\DE'$, $e'$, $\DS'$ such
      that \\
      $\config{\cache, \TT, \DT, \DE, e, \DS} \rightarrow \config{\cache',
        \TT', \DT', \DE', e', \DS'}$, or
  \item $\config{\cache, \TT, \DT, \DE, e, \DS} \rightarrow \blame$
  \end{enumerate}
\end{theorem}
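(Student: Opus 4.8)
The plan is to prove progress by induction on the typing derivation (1), equivalently by case analysis on the structure of $e$. The base cases are immediate. If $e$ is a value, conclusion~(1) holds directly. If $e = \sself$ or $e = x$, the derivation ends in \rname{TSelf} or \rname{TVar}, so $\sself$ (resp.\ $x$) lies in $\dom(\TE)$; by environment consistency~(3) we have $\dom(\TE)\subseteq\dom(\DE)$, so the variable is bound in $\DE$ and \rname{ESelf} (resp.\ \rname{EVar}) fires, giving conclusion~(2). Expression $\snew{A}$ steps by \rname{ENew}, while $\sdef{A.m}{\sprem{x}{e}}$ and $\smethsig{A.m}{\tau_m}$ step by \rname{EDef} and \rname{EType} (neither has a guard that can block reduction). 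None of these require the cache or the stack.

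For the compound cases I would thread progress through the evaluation contexts. Consider $e = e_1; e_2$: if $e_1$ is not a value then, by the induction hypothesis applied to $e_1$ (the ambient consistency hypotheses carry over unchanged), it either steps or reduces to $\blame$; taking the context $\Box; e_2$, I lift a step via \rname{EContext}, composing contexts when $e_1$'s own step is itself contextual or a call, and propagate blame otherwise; if $e_1$ is a value, \rname{ESeq} applies. The assignment $x = e_1$ and the conditional $\sif{e_0}{e_1}{e_2}$ are analogous, using contexts $x = \Box$ and $\sif{\Box}{e_1}{e_2}$, with \rname{EAssn}, \rname{EIfTrue}, and \rname{EIfFalse} as redex rules; the evaluated guard is either $\snil$ or $[A]$, and both cases are covered.

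The method call $e = e_0.m(e_1)$ is the main case. If $e_0$ is not a value I reduce it inside $\Box.m(e_1)$; once $e_0$ is a value $v_1$ but $e_1$ is not, I reduce $e_1$ inside $v_1.m(\Box)$; both by \rname{EContext}, propagating blame as above. When both are values we are at $C[v_1.m(v_2)]$. Since the derivation ends in \rname{TApp}, the receiver has a class type $A$, and as the only value of class type is an instance, $v_1 = [A]$. Typing also yields $\TT(A.m) = \tau_1 \rightarrow \tau_2$ together with a static type $\tau \leq \tau_1$ for $v_2$; because $v_2$ is a value its run-time type satisfies $\stype{v_2} = \tau \leq \tau_1$, which discharges the domain check common to both call rules. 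I then split on the cache: if $A.m \in \dom(\cache)$, cache consistency~(5) guarantees $\DT(A.m) = \sprem{x}{e}$ and $\TT(A.m) = \tau_1 \rightarrow \tau_2$, so \rname{EAppHit} applies; if $A.m \notin \dom(\cache)$, I case on $\DT(A.m)$: when the method is undefined I emit $\blame$, when its body fails to type check I again emit $\blame$, and otherwise the derivations $\derivm, \derivt$ exist and \rname{EAppMiss} applies.

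The main obstacle is exactly this last case: showing that, among the mutually exclusive shapes of a fully evaluated call, at least one reduction or blame transition is always enabled. This requires combining several hypotheses at once—environment consistency to turn the argument check into a run-time subtyping fact, cache consistency to ensure a hit corresponds to a genuinely defined and well-typed method, and the observation that \rname{TApp} forces a class-typed (hence non-$\snil$) receiver so the two call rules are applicable. The remaining blame transition, invoking a method on $\snil$, is delicate precisely because such a call is not typeable by \rname{TApp} on its own; it arises only when a class-typed receiver expression evaluates to $\snil$ at run time, and I would discharge it using the dedicated blame rule added for that case.
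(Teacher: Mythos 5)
Your proposal is correct and follows essentially the same route as the paper's proof: structural induction on $e$, with \rname{EContext} handling the inductive lifting, and the method-call case resolved by splitting on the cache and falling back to the blame rules when the receiver is $\snil$, the method is undefined, or its body fails to check. The only cosmetic difference is that you spell out the evaluation contexts per syntactic form where the paper collapses them into a single catch-all case $e = C[e']$.
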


\begin{proof}
  By induction on $e$.
  \begin{itemize}
  \item Case $e = \snil$ or $e = [A]$. These are values, so the theorem holds trivially.
  \item Case $\sself$. By assumption (1) we have
    \begin{displaymath}
      \infer
      { }
      {\TT \vdash \config{\TE, \sself} \Rightarrow \config{\TE, \TE(\sself)}}
  \end{displaymath}
  Thus, $\sself\in\dom(\TE)$. But then by (3),
  $\sself\in\dom(\DE)$. Thus \rname{ESelf} can be applied.

  (Note that assuming we start executing the program in a standard
  environment, $\sself$ will in fact always be bound in all type and
  dynamic environments, unlike variables.)

  \item Case $x$. Similar to $\sself$.

  \item Case $x=v$, $\snew{A}$, $v; e$, $\sdef{A.m}(\sprem{x}{e})$, $\smethsig{A.m}{\tau_m}$. These cases are trivial, as there is one semantics rule for each of these forms, and it will always be able to take a step.
  \item Case $\sif{v}{e_2}{e_3}$. This case is trivial, since either \rname{EIfTrue} or \rname{EIfFalse} will apply.
  \item Case $v_0.m(v_1)$. By assumption (1) we have
    \begin{displaymath}
      \infer
      {\TT \vdash \config{\TE, v_0} \Rightarrow \config{\TE_0, A} \\\\
      \TT \vdash \config{\TE_0, v_1} \Rightarrow \config{\TE_1, \tau} \\\\
      \TT(A.m) = \tau_1\rightarrow \tau_2 \\
      \tau \leq \tau_1 \\
      }
      {\TT \vdash \config{\TE, v_0.m(v_1)} \Rightarrow \config{\TE_1, \tau_2}}
    \end{displaymath}
    There are a few cases. If \rname{EAppNil}, \rname{EAppNExist}, or
    \rname{EAppNTyp} apply, then the theorem holds
    trivially. Otherwise, we must have $v_1 = [A]$ and
    $\DT(A.m) = \sprem{x}{e}$. More importantly, by (1) we have
    $\stype{v_1} \leq \tau_1$, since $\tau = \stype{v_1}$ by (1),
    i.e., $v_1$ has the expected argument type.  Also by (1) we have
    $\TT(A.m) = \tau_1 \rightarrow \tau_2$.

  Now there are two cases. If $A.m\in\dom(\cache)$ we can immediately
  apply \rname{EAppHit}. Otherwise, if $A.m\not\in\dom(\cache)$, then
  we must have
  $$\derivm = \left(\TT \vdash \config{[x \mapsto \tau_1, \sself
      \mapsto A], e} \Rightarrow \config{\TE', \tau}\right)$$
  holds and $\derivt = (\tau \leq \tau_2)$ holds because
  \rname{EAppNTyp} did not apply.  But then combining this with our
  previous assumptions we can apply \rname{EAppMiss}.

  \item Else $e = C[e']$. Holds by induction and \rname{EContext}.
  \end{itemize}
\end{proof}

Finally, we can put these together to prove soundness.

\begin{theorem}[Soundness]
  If
  $\emptyset \vdash \config{\emptyset, e} \Rightarrow \config{\TE', \tau}$
  then either $e$ reduces to a value, $e$ reduces to $\blame$, or $e$
  does not terminate.
\end{theorem}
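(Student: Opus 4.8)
The plan is to combine the already-established Progress and Preservation theorems in the standard way; essentially all the real content lives in those two results, so the remaining work is choosing a suitable initial configuration and running the usual induction over reduction sequences. First I would fix the initial configuration $\config{\emptyset, \emptyset, \emptyset, \emptyset, e, \DS_0}$, where the dynamic stack $\DS_0 = (\emptyset, \Box)::\cdot$ consists of a single identity-context frame, and choose the matching initial type stack $\TS_0 = \tselt{\emptyset}{\tau}{[\Box\mapsto\tau]}{\tau}::\cdot$, so that the sole top-level frame ``expects'' exactly the program's type $\tau$. With these in hand I would check that the six hypotheses of Preservation (equivalently, the five hypotheses of Progress) hold initially: hypothesis (2) is the given judgment $\emptyset \vdash \config{\emptyset, e} \Rightarrow \config{\TE', \tau}$; (3) holds because $\tau \leq \tau$; (4) holds since $\emptyset \sim \emptyset$; (5) is \emph{Stack consistency} of the identity frame, which unfolds to $\emptyset \vdash \config{[\Box\mapsto\tau], \Box} \Rightarrow \config{[\Box\mapsto\tau], \tau}$ by treating $\Box$ as a variable and applying \rname{TVar}; and (6) holds vacuously because the cache is empty.

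Next I would show that these invariants are maintained along every reduction sequence. Concretely, I would prove by induction on the number of steps that every configuration reachable from the initial one satisfies hypotheses (2)--(6) for some type stack and environments. The inductive step is precisely Preservation: given a reachable configuration that satisfies (2)--(6) and takes a step, Preservation supplies conclusions (a)--(f), and I would observe that these re-establish exactly the hypotheses for the successor configuration---(a) feeds (2), (b) feeds (3), (d) feeds (4), (e) feeds (5), and (f) feeds (6), with the new type stack being $\TS'$. The only conclusion not reused globally is (c), which is an internal device used inside Preservation's own \rname{EContext}/\rname{EIf} reasoning and plays no role here.

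With the invariant in place, I would appeal to Progress at each reachable configuration. Progress gives a three-way disjunction: the current expression is a value, the configuration steps, or it reduces to $\blame$. Combining this with the invariant yields the global picture: any maximal reduction sequence from the initial configuration is either (i) infinite, in which case $e$ does not terminate; or (ii) finite, and since Progress forbids getting stuck, its terminal configuration must hold either a value or $\blame$. Tracing the identity frame back through \rname{ERet}, a terminal value configuration corresponds to $e$ having reduced to a value at top level. This gives precisely the three stated outcomes.

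I expect the main obstacle to be the bookkeeping around the initial type stack and the stack-consistency invariant, rather than any deep argument. Getting the identity-context frame to satisfy \emph{Stack consistency}---in particular arranging its output environment and type so that $\TT \vdash \config{\TE[\Box\mapsto\tau], \Box} \Rightarrow \config{\TE', \tau}$ holds---is the fiddly part, and it is exactly what makes the final \rname{ERet} behave correctly. A secondary point is confirming that the extended semantics' $\blame$ rules are genuinely covered by Progress's application case analysis, so that no well-typed configuration is stuck; since Progress already enumerates \rname{EAppNil}, \rname{EAppNExist}, and \rname{EAppNTyp}, this should follow directly.
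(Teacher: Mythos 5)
Your proposal matches the paper's proof essentially exactly: both fix the initial configuration with dynamic stack $(\emptyset,\Box)::\cdot$ and a single top-level type-stack frame expecting $\tau$, verify the hypotheses of Progress and Preservation hold initially, and conclude by the standard induction over reduction sequences (which the paper compresses into ``soundness holds by standard arguments''). Your choice of $[\Box\mapsto\tau]$ as the frame's output environment is in fact slightly more careful than the paper's $\emptyset$, but this is bookkeeping, not a difference in approach.
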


\begin{proof}
  Let $\cache = \emptyset$, let $\TT = \emptyset$, let
  $\TE = \emptyset$, let $\DE = \emptyset$, let $\DT = \emptyset$, let
  $\DS = (\emptyset, \Box)::\cdot$, and let
  $\TS = \tselt{\emptyset}{\tau}{\emptyset}{\tau}$.
  Then by assumption we have $\TT \vdash \config{\TE, e} \Rightarrow
  \config{\TE', \tau}$. By construction we have $\tau \leq \TS$ and
  $\TE \sim \DE$ and $\TT \vdash \TS \sim S$ and $\cache \sim (\TT,
  \DT)$.
  Thus, these choices of $\cache$, $\TT$, $\TE$, $\DT$, $\DS$, and
  $\TS$ satisfy the preconditions of progress and preservation.
  Thus soundness holds by standard arguments.
\end{proof}

\end{document}